\newtheorem{definition}{Definition}
\newtheorem{theorem}{\bf Theorem}
\newtheorem{lemma}{\bf Lemma}
\newtheorem{proposition}{\bf Proposition}
\def\BibTeX{{\rm B\kern-.05em{\sc i\kern-.025em b}\kern-.08em
    T\kern-.1667em\lower.7ex\hbox{E}\kern-.125emX}}
\begin{document}
\sloppy
\title{Correlated-Sequence Differential Privacy}

\author{Yifan Luo, Meng Zhang, Jin Xu, Junting Chen, Jianwei Huang

\thanks{
    
Yifan Luo is with the Shenzhen Institute of Artificial Intelligence and Robotics for Society, the School of Science and Engineering, the Chinese University of Hong Kong, Shenzhen, Shenzhen 518172, China (email: yifanluo@link.cuhk.edu.cn). 
Meng Zhang is with the ZJU-UIUC Institute, Zhejiang University, Haining 314400, China.
Jin Xu is with the School of Management, Huazhong University of Science and Technology, Wuhan 430074.
Junting Chen is with the School of Science and Engineering and Shenzhen Future Network of Intelligence Institute (FNii-Shenzhen), The Chinese University of Hong Kong, Shenzhen, Shenzhen 518172, China.
Jianwei Huang is with the School of Science and Engineering, Shenzhen Institute of Artificial Intelligence and Robotics for Society, Shenzhen Key Laboratory of Crowd Intelligence Empowered Low-Carbon Energy Network, and CSIJRI Joint Research Centre on Smart Energy Storage, The Chinese University of Hong Kong, Shenzhen, Guangdong, 518172, P.R. China (corresponding author, email: jianweihuang@cuhk.edu.cn).

This work is supported by the National Natural Science Foundation of China (Project 62271434), Shenzhen Key Lab of Crowd Intelligence Empowered Low-Carbon Energy Network (No. ZDSYS20220606100601002), the Shenzhen Stability Science Program 2023, the Shenzhen Institute of Artificial Intelligence and Robotics for Society, and Longgang District Shenzhen's ``Ten Action Pla'' for Supporting Innovation Projects (No. LGKCSDPT2024002).
}}

\maketitle

\begin{abstract}

  Data streams collected from multiple sources are rarely independent. Values evolve over time and influence one another across sequences. These correlations improve prediction in healthcare, finance, and smart-city control yet violate the record-independence assumption built into most Differential Privacy (DP) mechanisms. To restore rigorous privacy guarantees without sacrificing utility, we introduce Correlated-Sequence Differential Privacy (CSDP), a framework specifically designed for preserving privacy in correlated sequential data. CSDP addresses two linked challenges: quantifying the extra information an attacker gains from joint temporal and cross-sequence links, and adding just enough noise to hide that information while keeping the data useful. We model multivariate streams as a Coupling Markov Chain, yielding the derived loose leakage bound expressed with a few spectral terms and revealing a counterintuitive result: stronger coupling can actually decrease worst-case leakage by dispersing perturbations across sequences. Guided by these bounds, we build the Freshness-Regulated Adaptive Noise (FRAN) mechanism—combining data aging, correlation-aware sensitivity scaling, and Laplace noise—that runs in linear time. Tests on two-sequence datasets show that CSDP improves the privacy-utility trade-off by approximately 50\% over existing correlated-DP methods and by two orders of magnitude compared to the standard DP approach.

\end{abstract}

\begin{IEEEkeywords}
  Correlated data privacy, sequential data, differential privacy (DP), privacy leakage analysis, Markov chain, Age of Information
\end{IEEEkeywords}

\section{Introduction}

Modern decision systems—from bedside early-warning monitors to algorithmic trading engines and city-scale sensor networks—consume torrents of measurements that are both temporally ordered and cross-correlated \cite{hernandez2024comparative, ficek2021differential, zhang2022differential, liu2024survey}. A patient's oxygen-saturation trace covaries with ward-level conditions; equities in allied sectors co-move every millisecond; air-quality sensors located on adjacent blocks register near-synchronous fluctuations. These intertwined sequences enable powerful forecasting models but also amplify privacy risk: by stitching together related readings across sources and time, an attacker can unmask a target's sensitive state even when each individual record is innocuous.

Differential Privacy (DP) is a useful standard for curbing such inference, yet its guarantee—together with the high-profile deployments by Apple, Google, and the U.S. Census—silently presumes independent records. Once correlation enters, two things: (i) a crafty adversary can exceed the advertised privacy budget, or (ii) the curator must inject prohibitive noise to stay safe. Bridging this gap has proven difficult because joint temporal-spatial dependencies explode global sensitivity and defeat existing calibration rules.

Early patches such as Correlated DP (CDP) \cite{zhang2022correlated} and Dependent DP (DDP) \cite{zhao2017dependent} inflate sensitivity but handle only static pairwise couplings and bleed utility in high dimensions. Age-Dependent DP (ADP) \cite{Zhang2023} recognizes that stale timestamps matter less, yet still treats each sequence in isolation. Pufferfish \cite{kifer2014pufferfish} and Bayesian DP  \cite{he2020bayesian} can represent arbitrary dependencies; nevertheless, tuning noise under those definitions is NP-hard beyond toy domains. Recent mechanisms for network data \cite{chen2014correlated} or Fourier-based releases \cite{ou2020optimal} tackle either spatial or temporal correlation—rarely both—and offer limited theoretical insight. The technical challenge is therefore twofold: (a) quantify privacy leakage when correlations span multiple sequences and time steps, and (b) design a mechanism whose noise budget adapts to—and ideally benefits from—the observed coupling.

We tackle these challenges by proposing \emph{Correlated-Sequence Differential Privacy (CSDP)}, target at privacy preservation of correlated sequential data, and modeling multivariate time series as a Coupling Markov Chain (CMC) \cite{ching2002multivariate}, a lightweight extension of homogeneous chains that captures self-dependence and cross-coupling in one block transition matrix. Exploiting the spectral stability of CMCs, we show—counter-intuitively—that stronger coupling can reduce worst-case leakage because perturbations disperse across sequences. Building on this insight, we craft a correlation-aware release mechanism FRAN that ages data and calibrates Laplace noise using a novel sensitivity bound, delivering tight privacy guarantees with minimal utility cost.

The contributions of this paper are as follows:
\begin{itemize}
  \item \emph{Correlation-aware privacy modeling.} We introduce the first DP framework--CSDP that simultaneously captures temporal and cross-sequence dependencies by embedding multivariate streams. This perspective turns a previously intractable joint-leakage problem into one governed by spectral properties, paving the way for deployable privacy in hospital, finance, and IoT settings.
  \item \emph{Tight leakage characterization despite high-dimensional coupling.} The central difficulty—global sensitivity exploding with joint correlations—is resolved through loose and tight bounds expressed in CMC eigenvalues. The loose bound uncover a counter-intuitive insight: stronger inter-sequence coupling can decrease worst-case leakage, overturning the prevailing ``correlation always hurts privacy'' mindset.
  \item  \emph{Linear-time release mechanism with correlation-adaptive noise.} Leveraging the above bounds, we craft a FRAN mechanism including data aging, correlation-aware sensitivity, and Laplace noise—that calibrates noise without the NP-hard optimization required by Pufferfish-style schemes. The algorithm runs in $O(sd)$ time, scales to more than one hundred thousand sequences, and can be integrated into existing DP workflows.
  \item \emph{Enhanced privacy-utility trade-off with actionable guidelines.} Closed-form data utility versus privacy leakage curves for two CMCs quantify the relationship between accuracy and coupling strength. Based on this framework, we propose dynamic noise adjustment and sequence segmentation methods that improve privacy-utility trade-offs by approximately 50\% compared to state-of-the-art correlated-DP baselines on the two-sequence datasets.
\end{itemize}


\section{Related Work}

Early attempts to relax the \emph{independent–records} assumption fall into three main threads.

\emph{Sensitivity–scaling DP models.}
Correlated DP \cite{zhang2022correlated} and Dependent DP \cite{zhao2017dependent} inflate global sensitivity based on \emph{pre-specified} pairwise links, while Age-Dependent DP (ADP) \cite{Zhang2023} discounts stale records. Because the correlation budget is fixed offline, the required Laplace noise grows roughly linearly with stream length, making these schemes impractical for
high-frequency, multivariate data.

\emph{Prior–based DP models.}
Pufferfish \cite{kifer2014pufferfish} and Bayesian DP \cite{he2020bayesian} represent arbitrary dependencies through attacker priors. Achieving a target privacy budget reduces to a combinatorial calibration problem that is {NP}-hard even on modest domains, which confines current implementations to toy-size releases.

\emph{Structure-aware mechanisms.}
A complementary line of work tailors noise to specific data types: graph
releases add topology-aware perturbations to degree distributions and
motifs \cite{chen2014correlated}; Fourier or wavelet transforms improve the accuracy of single-sequence IoT releases \cite{ou2020optimal}. In federated learning, Fed-CAD \cite{chang2024fedcad} and similarity-aware re-weighting \cite{zhang2024gan} adapt local-DP noise to client overlap, while \cite{shuai2024poisoncatcher} reveals poisoning risks under locally private settings. These mechanisms typically handle \emph{either} spatial \emph{or} temporal correlation and are restricted to their target queries.

No existing approach simultaneously (i) derives closed-form leakage bounds for \emph{joint} temporal–spatial multivariate dependencies and (ii) scales to hundreds of correlated sequences. Addressing this gap is the focus of our work.

\section{Preliminaries}

In this section, we first introduce the traditional DP and the typical mechanism -- Laplace mechanism to reach DP. We then introduce the CMC model and final present the ADP.

\subsection{DP and Laplace Mechanism}

DP is a gold standard for preserving individual privacy in data analysis. It guarantees that the removal or addition of a single data item does not significantly affect the output of a mechanism.

\begin{definition}[DP \cite{dwork2006differential}]
A randomized mechanism $\mathcal{M}: \mathcal{X} \rightarrow \mathcal{Y}$ satisfies $\epsilon$-differential privacy if for any two neighboring datasets $X, X'$ (differing in one individual record), and for any measurable subset $S \subseteq \mathcal{Y}$,
\begin{equation}
    \Pr[\mathcal{M}(X) \in S] \leq e^{\epsilon} \cdot \Pr[\mathcal{M}(X') \in S].
\end{equation}
\end{definition}

The performance of DP depends on the \emph{sensitivity} of the function being queried.

\begin{definition}[$\ell_1$-Sensitivity]
Given a function $f: \mathcal{X} \rightarrow \mathbb{R}^d$, its global sensitivity is defined as
\begin{equation}
    \Delta f = \max_{X, X'} \| f(X) - f(X') \|_1,
\end{equation}
where $X$ and $X'$ differ in one record.
\end{definition}

One of the most common mechanism to reach $\epsilon$-DP is Laplace mechanism, as defined in the following.

\begin{definition}[Laplace Mechanism]
Given a function $f$ with sensitivity $\Delta f$, the Laplace mechanism ${M}_L$ adds i.i.d. noise from $\text{Lap}(\Delta f/\epsilon)$\footnote{The Laplace distribution $\text{Lap}(b)$ has probability density function $p(x) = \frac{1}{2b}e^{-\frac{|x|}{b}}$, where $b > 0$ is the scale parameter. This distribution has mean 0 and variance $2b^2$. In the context of differential privacy, the scale parameter $b = \frac{\Delta f}{\epsilon}$ determines the amount of noise added, with larger values providing stronger privacy guarantees but reducing accuracy.} to each output dimension:
\begin{equation}
    {M}_L(X) = f(X) + \text{Lap}\left(\frac{\Delta f}{\epsilon}\right).
\end{equation}
This mechanism satisfies $\epsilon$-DP.
\end{definition}

\subsection{CMC Model \cite{ching2002multivariate}}
Sequential and multivariate data streams usually exhibit both \emph{intra-} (self) and \emph{inter-} (cross) correlations. To capture this, we introduce a typical multivariate Markovian model--Coupling Markov Chain (CMC)\cite{ching2002multivariate}, an extension of time-homogeneous Markov chains to multiple correlated sequences, where a sequence generates data by one source over time.

We assume that there are $s$ categorical sequences and each has $m$ possible states in the state set $\{1, 2, \ldots, m\}$. Source \(i\in\{1,\dots,s\}\) generates a discrete-time sequence \(\boldsymbol{x}^{(i)}=(x^{(i)}_{1},x^{(i)}_{2},\dots,x^{(i)}_{n})\) over an alphabet \(\mathcal{X}=\{1,2,\ldots,m\}\). The system snapshot at slot \(t\) is the vector \(\boldsymbol{x}_{t}=(x^{(1)}_{t},\dots,x^{(s)}_{t})\), as shown in Fig.~\ref{fig:cmc_model}.


\begin{figure}[t]
    \centering
    \includegraphics[width=0.48\textwidth]{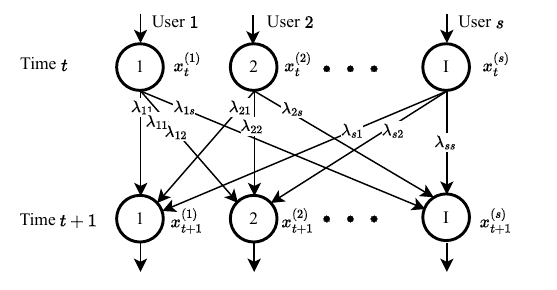}
    \caption{An illustration of CMC model.}
    \vspace{-15pt}
    \label{fig:cmc_model}
\end{figure}

Let $\{\boldsymbol{\pi}^{(k)}_n \}_{n=1}^{T}$ denote the state probability distribution vector of the $k$-th sequence at time $n$, where each sequence has $m$ discrete states. The coupling between sequences is defined as:
\begin{equation}
     \boldsymbol{\pi}^{(j)}_{n+1} = \sum_{k=1}^{s} \lambda_{jk} \boldsymbol{P}^{(jk)}  \boldsymbol{\pi}^{(k)}_n, \quad \forall j \in {1, \dots, s},
\end{equation}
where $\boldsymbol{P}^{(jk)} \in \mathbb{R}^{m \times m}$ is the transition matrix from the $k$-th sequence to the $j$-th sequence, and $\lambda_{jk} \ge 0$ with $\sum_{k=1}^s \lambda_{jk} = 1$.

Stacking all $s$ distributions into $\boldsymbol{\pi}_n = \left( \boldsymbol{\pi}^{(1)}_n, \dots,  \boldsymbol{\pi}^{(s)}_n\right)^{T}$, the system evolves as:
\begin{equation}
    \boldsymbol{\pi}_{n+1} = \boldsymbol{Q} \boldsymbol{\pi}_n,
\end{equation}
where $\boldsymbol{Q}$ is a block matrix composed of $\lambda_{jk} \boldsymbol{P}^{(jk)}$.

There are some key properties of CMC \cite{ching2002multivariate}:
\begin{itemize}
    \item \emph{Spectral Stability}: If $\lambda_{jk} > 0$ for all $j$ and $k$, the matrix $\boldsymbol{Q}$ has a dominant eigenvalue equal to $1$, with all other eigenvalues bounded by $1$ in modulus. This guarantees long-term stability.
    \item \emph{Steady-State Convergence}: If every $\boldsymbol{P}^{(jk)}$ is irreducible, there exists a unique stationary vector
    $\boldsymbol{\pi}^{\star} = [\boldsymbol{\pi}^{(1)}, \dots, \boldsymbol{\pi}^{(s)}]^{T}$
    with $\boldsymbol{\pi}^{\star}=\boldsymbol{Q}\,\boldsymbol{\pi}^{\star}$ and $\sum_{i=1}^m [\boldsymbol{\pi}^{(j)}]_i = 1$  for all $j$.  Moreover, $\lim_{n \to \infty} \boldsymbol{\pi}_n = \boldsymbol{\pi}^{\star}$.
\end{itemize}
These guarantees imply that a CMC converges to a single stationary
distribution that faithfully preserves both intra- and inter-sequence
dependencies, providing a solid probabilistic backbone for analyzing privacy leakage over time.

\textbf{Significance for Privacy Analysis.} The stationary convergence property is crucial for our privacy analysis as it: (1) ensures our leakage bounds remain valid over extended time horizons by stabilizing the system's probabilistic behavior, and (2) enables characterization of correlation decay with age through spectral stability, directly informing how privacy protection improves with data staleness. This convergence guarantee provides the theoretical foundation for the Aging mechanism in our proposed algorithm.

\subsection{Age-Dependent Differential Privacy (ADP)\cite{Zhang2023}}

Classical DP assumes data freshness is irrelevant to privacy. However, in time-evolving data, outdated information is often less sensitive. ADP introduces a time-aware generalization.
\begin{definition}[ADP \cite{Zhang2023}]
    A mechanism $\mathcal{M}$ satisfies $(\epsilon(t), t)$-ADP if, for all pairs of datasets $X_t, X_t'$ differing in one individual's state at time $t$, and for all events $S \subseteq \mathcal{Y}$,
    \begin{equation}
        \label{eq:adp}
        \Pr[\mathcal{M}(X_0) \in S \mid X_t] \le e^{\epsilon(t)} \cdot \Pr[\mathcal{M}(X_0) \in S \mid X_t'].
    \end{equation}
\end{definition}
Here, $\epsilon(t)$ is a function of data age. As $t \to \infty$, data becomes stale and $\epsilon(t) \to 0$, implying less privacy risk.

At time \(t\) we publish one noisy and aged snapshot \(M(X_{0})\); inequality \eqref{eq:adp} then caps how much this release can sway an adversary's belief about the \emph{current} record \(X_{t}\) by the factor \(e^{\epsilon(t)}\). Natural dynamics weaken the link between \(X_{0}\) and \(X_{t}\), and the added noise blurs what remains, so the budget \(\epsilon(t)\) decays with \(t\) toward zero—allowing yesterday's data to be useful while shielding today's state.

\begin{proposition}[Privacy Leakgae Level of $(\epsilon(t), t)$-ADP \cite{Zhang2023}]
If $\mathcal{M}$ satisfies $\epsilon_C$-DP, and the underlying stochastic process has a maximal total variation distance $\Delta(t)$ at lag $t$, then $\mathcal{M}$ also satisfies ADP with
\begin{equation}
    \epsilon(t) = \ln\left(1 + \Delta(t) \cdot (e^{\epsilon_C} - 1)\right).
\end{equation}
\end{proposition}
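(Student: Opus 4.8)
The plan is to reduce the ADP ratio to an ordinary DP ratio by marginalizing over the unobserved aged snapshot $X_0$, and then to control the discrepancy between the two conditional laws of $X_0$ using the total variation bound $\Delta(t)$. First I would use that the internal randomness of $\mathcal{M}$ is independent of the stochastic process to write
\begin{equation}
    \Pr[\mathcal{M}(X_0)\in S\mid X_t]=\sum_{x_0} p(x_0)\,q(x_0),
\end{equation}
where $p(x_0):=\Pr[\mathcal{M}(x_0)\in S]$ and $q(x_0):=\Pr[X_0=x_0\mid X_t]$, and analogously $q'(x_0):=\Pr[X_0=x_0\mid X_t']$ for the neighboring current state $X_t'$. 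This expresses each conditional release probability as a $q$- (resp.\ $q'$-) weighted average of the per-snapshot release probabilities $p(\cdot)$.

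Next I would invoke $\epsilon_C$-DP of $\mathcal{M}$. Since $X_t$ and $X_t'$ differ in a single individual's state, the induced time-$0$ snapshots are neighboring, so DP gives $p(x_0)\le e^{\epsilon_C}p(x_0')$ for every relevant pair, i.e.\ $p_{\max}\le e^{\epsilon_C}p_{\min}$ over the support of $p(\cdot)$. Writing $A=\sum_{x_0}p(x_0)q(x_0)$ and $B=\sum_{x_0}p(x_0)q'(x_0)$, I would split $\delta:=q-q'$ into positive and negative parts $\delta^+,\delta^-$, which satisfy $\sum_{x_0}\delta^+(x_0)=\sum_{x_0}\delta^-(x_0)=\Delta(t)$ by the definition of the maximal total variation distance at lag $t$.

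The core estimate then follows by extremizing the weights: $A-B=\sum p\,\delta^+-\sum p\,\delta^-\le p_{\max}\Delta(t)-p_{\min}\Delta(t)$, while $B\ge p_{\min}\sum_{x_0}q'(x_0)=p_{\min}$. Dividing yields
\begin{equation}
    \frac{A}{B}=1+\frac{A-B}{B}\le 1+\Delta(t)\Big(\tfrac{p_{\max}}{p_{\min}}-1\Big)\le 1+\Delta(t)\,(e^{\epsilon_C}-1),
\end{equation}
and taking logarithms recovers $\epsilon(t)=\ln\!\big(1+\Delta(t)(e^{\epsilon_C}-1)\big)$; the symmetric argument with $X_t,X_t'$ exchanged completes the two-sided bound. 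As a sanity check, $\Delta(t)=1$ returns $\epsilon(t)=\epsilon_C$ while $\Delta(t)=0$ returns $\epsilon(t)=0$, matching the stated decay of the budget with staleness.

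The step I expect to be the main obstacle is identifying $\Delta(t)$ as precisely the quantity that factors out of the weighted sums. One must argue that the maximal total variation distance cleanly upper-bounds the TV distance between the \emph{posterior} laws $\Pr[X_0\mid X_t]$ and $\Pr[X_0\mid X_t']$ over neighboring current states—which, because we condition on $X_t$ while releasing $X_0$, may require a Bayes/time-reversal step relating the posterior gap to the forward lag-$t$ mixing of the process—and that the worst case is attained by placing $p_{\max}$ on the support of $\delta^+$ and $p_{\min}$ on that of $\delta^-$. The remaining care is in justifying the independence used in the marginalization and confirming that neighboring current states indeed induce neighboring time-$0$ snapshots, so that the bound $p_{\max}/p_{\min}\le e^{\epsilon_C}$ applies uniformly.
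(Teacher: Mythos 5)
Your argument is correct and is the standard proof of this result; note that the paper itself imports this proposition from \cite{Zhang2023} without reproving it, but your machinery (total-probability marginalization over the aged snapshot, splitting $q-q'$ into $\delta^{\pm}$, extremizing with $p_{\max},p_{\min}$, and bounding $p_{\max}/p_{\min}\le e^{\epsilon_C}$) is exactly the machinery the authors deploy in Appendix~\ref{appendix:thm1}, whose Step~7 passes through the identical intermediate form $\ln\bigl(1+\Delta\cdot(e^{\epsilon}-1)\bigr)$. Two of the obstacles you flag are less severe than you fear: the quantity $\Delta(t)$ is \emph{defined} in this framework (cf.\ Definition~\ref{def:tvk}) as the maximal total-variation distance between the conditional laws of the aged snapshot given $X_t$ versus $X_t'$, so no Bayes/time-reversal step is needed to identify it with the posterior gap. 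The one step that does need tightening is your claim that $p_{\max}\le e^{\epsilon_C}p_{\min}$ holds uniformly over the supports of $\delta^{+}$ and $\delta^{-}$: in a multi-user dataset two snapshots in those supports can differ in many records, so you should first condition on the non-differing coordinates $x_0^{(-i)}$ (under which $q$ and $q'$ agree) and apply the ratio bound only to the inner sum over $x_0^{(i)}$, where the compared snapshots are genuinely neighbouring; the outer average then preserves the bound. In the single-record ADP setting of \cite{Zhang2023} this reduction is vacuous and your proof goes through as written.
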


This characterization shows how data aging can be used alongside noise injection to reduce privacy risk without compromising data utility severely. It provides a quantitative way to trade off accuracy and protection in dynamic systems.






\section{Correlated-Sequence Differential Privacy (CSDP)}
In this section, we begin by introducing the setup of sequence correlation. Building on this foundation, we will define the CSDP framework and subsequently present the mechanism FRAN to achieve CSDP.

\subsection{Correlation Model of Sequences}

Practical data streams—ranging from multi-sensor IoT feeds to user activity logs—are rarely independent. They exhibit correlations both in time (within each stream) and in space (across streams), and ignoring these links leads to over-optimistic privacy claims.

We consider a scenario where multiple data sources, such as sensors or users, produce time-indexed data sequences that are interdependent. 
Let \(s\) be the number of correlated sources.
Source \(i\in\{1,\dots,s\}\) generates a discrete-time sequence
\(\boldsymbol{x}^{(i)}=(x^{(i)}_{1},x^{(i)}_{2},\dots,x^{(i)}_{T})\) over an alphabet
\(\mathcal{X}\).
The system snapshot at slot \(t\) is the vector
\(\boldsymbol{x}_{t}=(x^{(1)}_{t},\dots,x^{(s)}_{t})\in\mathcal{X}^{s}\).
Collecting all \(T\) snapshots yields the spatio-temporal database
\begin{equation}
    X_{1:T}\;=\;\bigl\{\boldsymbol{x}_{t}\bigr\}_{t=1}^{T}\in\mathcal{X}^{s\times T}.
\end{equation}

We then form the correlation structure from the following two types of dependencies:
\begin{itemize}
\item \emph{Intra-sequence (temporal):}
      for any source \(i\) and lag \(k\ge 1\),
      \(x^{(i)}_{t}\) and \(x^{(i)}_{t+k}\) are statistically dependent.
\item \emph{Inter-sequence (spatial):}
      for two distinct sources \(i\neq j\) and lag \(k\ge 1\),
      \(x^{(i)}_{t}\) and \(x^{(j)}_{t+k}\) may also be dependent.
\end{itemize}

Because adversaries can leverage these dependencies to infer sensitive information, privacy mechanisms must explicitly account for them rather than assume independent records.






\begin{algorithm}[t]
\caption{FRAN: Freshness-Regulated Adaptive Noise for CSDP}
\label{alg:2step_mechanism}
\textbf{Input:} 
Multivariate time-series data \( X_{1:T} \), privacy budget \(\epsilon_S\), Laplace noise level \(\epsilon_C\), AoI vector \(\boldsymbol{A}_t = (A^{(1)}_t, \dots, A^{(s)}_t)\), query function \(f\).

\begin{algorithmic}[1]
\Statex \hspace{-15pt} {\textit{\%\% \textbf{Phase 1. Data-Aging Phase:}}}
\State Slide every sequence back by the AoI vector:
\begin{equation}
    \tilde{\boldsymbol{x}}_{t} \;=\;
\bigl(x^{(1)}_{\,t-A^{(1)}_{t}},\dots,x^{(s)}_{\,t-A^{(s)}_{t}}\bigr).
\end{equation}

\Statex {}
\Statex \hspace{-15pt} {\textit{\%\% \textbf{Phase 2. Noise-Injection Phase:}}}
\State After aging, apply the Laplace mechanism to the aged data \(\tilde{\boldsymbol{x}}_{t}\) for any query function \(f: \mathcal{X}^s \to \mathbb{R}^d\):
\begin{equation} \label{eq:mechanism}
    M_S(X_{1:T}) = f(\tilde{\boldsymbol{x}}_{t}) + \eta, \quad \eta \sim \text{Lap} \left(\frac{\Delta_f}{\epsilon_C}\right),
\end{equation}
where \(\Delta_f\) is the sensitivity of the query \(f\) and \(\epsilon_C\) is the Laplace noise level.
\end{algorithmic}

\textbf{Output:} FRAN mechanism output \(M_S(X_{1:T})\).
\end{algorithm}
\vspace{-15pt}

\subsection{Definition of CSDP}

Classic differential privacy assumes independent records; here we extend the guarantee to spatio-temporal datasets in which records are both time-ordered and cross-correlated.

\begin{definition}[Neighbouring spatio-temporal databases]
    Fix a horizon \(T\) and \(s\) sources.  Let \(X_{1:T},\,X'_{1:T}\in\mathcal{X}^{s\times T}\) be two datasets.
    We say they are \emph{\(t\)-neighbour}, written \(X_{1:T}\overset{t}{\sim}X'_{1:T}\), if  $\boldsymbol{x}_{t}\in X_{1:T}$ and $\boldsymbol{x}'_{t}\in X'_{1:T}$ differ in \emph{exactly one} entry for the given time \(t\): there exist \(i_{0}\in\{1,\dots,s\}\) such that
    \begin{equation}
        x_t^{(i)} = x_t^{\prime(i)} \quad \forall i \neq i_0, \quad \quad x_t^{(i_0)} \neq x_t^{\prime(i_0)}.
    \end{equation}
\end{definition}

Let \(\mathbf{\Lambda}\) be the family of joint distributions that capture the
temporal and spatial dependencies described in Section~III.  
All probabilities below are taken with respect to an arbitrary \(\pi\in\mathbf{\Lambda}\).

\begin{definition}[CSDP]
    \label{def:csdp}
    A randomised mechanism
    \(M:\mathcal{X}^{s\times T}\!\rightarrow\!\mathcal{Y}\)
    satisfies \emph{\((\epsilon_{S},\mathbf{\Lambda})\)-CSDP} if, for every
    \(\pi\in\mathbf{\Lambda}\), every time index \(t\), every pair of
    \(t\)-neighbouring datasets \(X_{1:T}\overset{t}{\sim}X'_{1:T}\), and every
    measurable output set \(S\subseteq\mathcal{Y}\),
\begin{equation}
    \Pr[M(X_{1:T})\in S | \boldsymbol{x}_t ]\;\le\;
    e^{\epsilon_{S}}\,
    \Pr[M(X'_{1:T})\in S | \boldsymbol{x}'_t].
    \label{eq:csdp}
\end{equation}
\end{definition}

The leakage parameter \(\epsilon_{S}\) plays the usual role: smaller values
yield stronger privacy but may degrade utility. Inequality~\eqref{eq:csdp} ensures that, even under any dependence model in \(\mathbf{\Lambda}\), observing the mechanism's output barely changes an adversary's belief about the single fresh record that differs between the two datasets.

CSDP naturally generalizes existing privacy frameworks.
\begin{lemma}[Generalization of ADP and DP]\label{lemma:csdp}
   \begin{itemize}
       \item With only independent distributions, $(\epsilon_S, \mathbf{\Lambda})$-CSDP reduces to $\epsilon_S$-DP.
       \item With only temporal correlations, $(\epsilon_S, \mathbf{\Lambda})$-CSDP reduces to $(\epsilon(t), t)$-ADP where $\epsilon_S = \max_t \epsilon(t)$.
   \end{itemize}
\end{lemma}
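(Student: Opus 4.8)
The plan is to treat both statements as specializations of the single CSDP inequality \eqref{eq:csdp}: in each case I will restrict the admissible family $\mathbf{\Lambda}$ and show that the conditional probabilities appearing in \eqref{eq:csdp} collapse onto those of the target definition, after which the inequality itself becomes a syntactic match. Since CSDP quantifies over \emph{every} $\pi\in\mathbf{\Lambda}$, narrowing $\mathbf{\Lambda}$ can only weaken the constraint, so each reduction amounts to identifying which classical guarantee the surviving constraint encodes.

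For the DP reduction, I would take $\mathbf{\Lambda}$ to be the family of product (fully independent) distributions, for which the joint law over records factorizes across both sources and time. The first observation is that the $t$-neighbouring relation $X_{1:T}\overset{t}{\sim}X'_{1:T}$, which flips exactly one coordinate $x_t^{(i_0)}$, coincides with the single-record neighbouring relation of ordinary DP. The second, and more delicate, observation is that conditioning on $\boldsymbol{x}_t$ is vacuous here: because the mechanism already receives the database as input, and because independence means $\boldsymbol{x}_t$ carries no information about the unobserved randomness of the remaining entries, $\Pr[M(X_{1:T})\in S\mid\boldsymbol{x}_t]=\Pr[M(X_{1:T})\in S]$. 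Substituting both observations into \eqref{eq:csdp} yields precisely $\Pr[M(X_{1:T})\in S]\le e^{\epsilon_S}\,\Pr[M(X'_{1:T})\in S]$, i.e. $\epsilon_S$-DP.

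For the ADP reduction, I would restrict $\mathbf{\Lambda}$ to distributions carrying only intra-sequence (temporal) dependence—equivalently, setting the cross-coupling weights $\lambda_{jk}=0$ for $j\neq k$, so the block chain $\boldsymbol{Q}$ decouples into independent per-source chains. Then the relevant snapshot reduces to a single evolving record, the aged mechanism input $\tilde{\boldsymbol{x}}_t$ plays the role of the released $X_0$ in \eqref{eq:adp}, and the conditioned record plays the role of $X_t$. With these identifications, for each fixed time index $t$ the CSDP constraint \eqref{eq:csdp} is literally the ADP inequality \eqref{eq:adp} with some per-lag exponent $\epsilon(t)$ supplied by the temporal total-variation decay. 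The one genuinely new point is the quantifier over $t$: CSDP demands a \emph{single} budget $\epsilon_S$ that dominates every time index simultaneously, so the smallest admissible constant is $\epsilon_S=\max_t\epsilon(t)$, which is exactly the stated correspondence.

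The main obstacle I anticipate is the measure-theoretic bookkeeping of the conditional probabilities rather than any deep inequality: I must state precisely over which source of randomness the conditioning acts (the process prior $\pi$ versus the mechanism's internal noise), verify that under independence the conditioning genuinely drops out, and confirm that under temporal-only correlation the conditioning aligns coordinate-for-coordinate with ADP's $X_t$. Once the two neighbouring notions and the two conditioning conventions are reconciled, both claims follow without further computation.
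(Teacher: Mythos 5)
Your proposal is correct and follows essentially the same route as the paper's own proof: for the independent case both arguments observe that conditioning on $\boldsymbol{x}_t$ becomes vacuous so \eqref{eq:csdp} collapses to the standard DP inequality, and for the temporal-only case both decouple the sequences, identify the per-time constraint with the ADP inequality \eqref{eq:adp}, and take $\epsilon_S=\max_t\epsilon(t)$ to obtain a single uniform budget. Your additional care about which source of randomness the conditioning acts on is a reasonable refinement but does not change the argument.
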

\begin{proof}[Proof Sketch]
   For independent distributions, conditioning becomes irrelevant, yielding standard DP. For temporal-only correlations, CSDP independently guarantees each sequence's privacy, equivalent to ADP. See Appendix~\ref{appendix:lemma1} for details.
\end{proof}

\subsection{FRAN Mechanism Design}
\label{subsec:2step}

Protecting correlated streams requires more than simply adding Laplace noise: the strength of the privacy guarantee depends jointly on \emph{how fresh} the published data are and \emph{how much} randomisation is applied. We therefore propose \emph{FRAN} (Freshness-Regulated Adaptive Noise), which separates the mechanism into two explicit phases.

\subsubsection{FRAN Mechanism Design}

As shown in Algorithm~\ref{alg:2step_mechanism} and Fig.~\ref{fig:2step_mechanism}, FRAN operates in two phases:

\textbf{(Phase 1. Data-aging.)}  
Recent measurements carry higher privacy risk due to strong correlations with the current state. To reduce this, each sequence is shifted by an \emph{Age-of-Information (AoI)} vector \( \boldsymbol{A}_t = (A^{(1)}_t, \dots, A^{(s)}_t) \), weakening the correlation between the published data and the current state. Larger \(A^{(i)}_t\) values reduce the maximal total-variation distance \( \Delta(t) \), thus lowering the required privacy budget. However, excessive aging decreases data timeliness, requiring a balance between AoI and noise level.

\textbf{(Phase 2. Noise-injection.)}  
After aging, we apply the Laplace mechanism to the query function \( f \) with correlation-aware sensitivity, computed as the worst-case change in \( f \) when two inputs differ by one allowed entry (\(i_0, t\)) under the dependency model \( \mathbf{\Lambda} \). The noise scale \( \epsilon_C \) is the Laplace noise level, with higher values reducing noise (less privacy protection but better utility).

\begin{figure}[!t]
    \centering
    \includegraphics[width=0.5\textwidth]{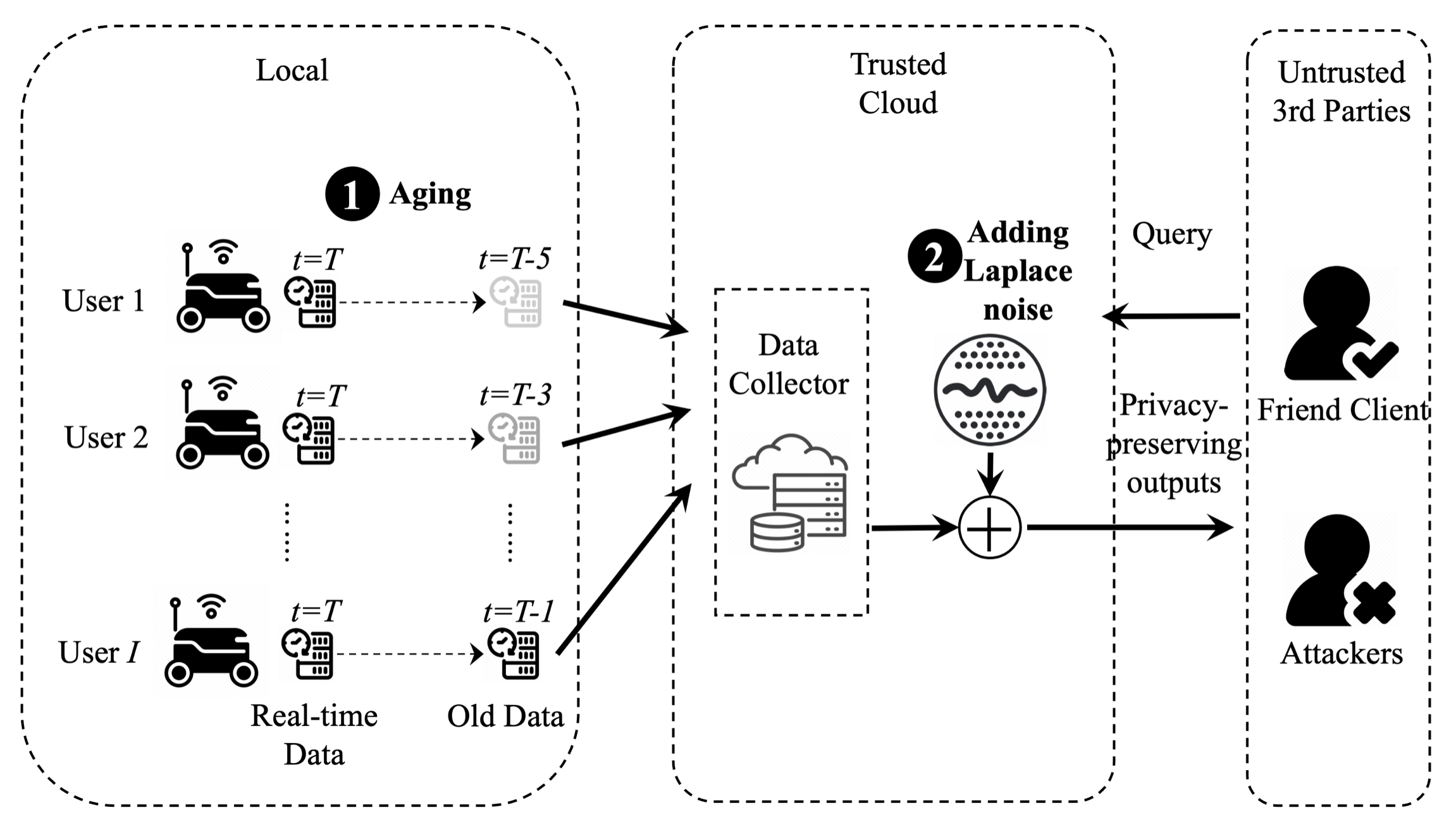}
    \caption{The illustration of FRAN mechanism for CSDP.}
    \vspace{-13pt}  
    \label{fig:2step_mechanism}
\end{figure}

\subsubsection{Complexity Analysis}
The FRAN mechanism achieves linear-time complexity $O(sd)$, where $s$ is the number of sequences and $d$ is the output dimension. Specifically, Phase 1 (data aging) requires $O(s)$ operations to shift each sequence by its corresponding AoI value. Phase 2 (noise injection) requires $O(d)$ operations to add Laplace noise to the $d$-dimensional query output. Since both phases process each sequence independently and avoid complex correlation computations at runtime, FRAN scales efficiently to large multivariate time series datasets with thousands of sequences.

\subsubsection{Interplay Between Aging and Noise}

Equation~\eqref{eq:mechanism} shows the two controls side-by-side:
larger AoI increases staleness but permits a smaller \(\epsilon_C\), whereas
reducing AoI improves freshness but demands either stronger noise or a larger
privacy parameter \(\epsilon_{S}\).  The mechanism designer can therefore
tune \((\boldsymbol{A}_{t},\epsilon_C)\) to meet an application-specific
freshness-utility target under a fixed privacy budget, or conversely hold
utility constant and tighten privacy by modestly aging the input.



\subsubsection{Behavior in Extreme Cases}
The FRAN mechanism exhibits interesting properties at the extremes:
\begin{itemize}
\item \textbf{Fresh-Data Regime} (\(\boldsymbol{A}_{t}=\boldsymbol{0})\): With no aging, FRAN falls to the Laplace mechanism. Privacy may deteriorate quickly when streams are strongly coupled.

\item \textbf{Stale-Data Regime} (Large \(\boldsymbol{A}_{t}\) where \(\Delta(t)\approx 0\)): The release becomes almost independent of the present state, requiring minimal noise—but the information becomes correspondingly stale.
\end{itemize}

In summary, FRAN adapts classical DP to the correlated, time-evolving setting by \emph{first} throttling the freshness of the data and \emph{then} adding just enough noise to cover any residual leakage, yielding a flexible privacy-freshness-accuracy trade-off tailored to correlated sequential data.

\section{Privacy-Leakage Bounds under CSDP}
\label{sec:leakage}
This section quantifies how much information a CSDP-compliant release can
reveal.  We present two complementary results:
\begin{itemize}
    \item \textbf{Loose bound (Section V.A).}  A closed-form expression that is esay to evaluate and therefore attractive for quick parameter tuning.  It yields a safety margin—the true leakage never exceeds this bound—at the price of some conservatism.
    \item \textbf{Tight bound (Section V.B).}  A sharper, optimisation-based bound that tracks the dependency structure more closely.  It offers tighter guarantees but requires solving a small convex programme or Monte-Carlo estimator.
\end{itemize}
We show when each bound is preferable and illustrate the accuracy-complexity
trade-off with numerical examples.

\subsection{Loose Privacy--Leakage Bound}
\label{subsec:loose}
This subsection develops a quick, conservative upper bound on the CSDP
leakage parameter~$\epsilon_{S}$.  
The key idea is to decompose leakage into four intuitive factors:
(i) how many users are directly correlated,  
(ii) how sensitive the query is to those users,  
(iii) how much correlation remains after aging, and  
(iv) the scale of the Laplace noise.
\begin{definition}[Correlation degree $k$]
\label{def:k}
Under the dependency model~$\mathbf{\Lambda}$, every user record is directly
correlated with at most $k-1$ other users.  The integer $k$ is called the
\emph{correlation degree}.
\end{definition}
The correlation degree captures the width of the dependency graph.
When $k=1$ the users are independent; larger values indicate tighter coupling
and therefore greater potential privacy risk.
\begin{definition}[$k$-sensitivity $d(k)$]
\label{def:dk}
Let $f:\mathcal{X}^{s}\!\rightarrow\!\mathbb{R}^{d}$ be a query.
For $i\in\{1,\dots,k\}$ define
\begin{equation}
    d(k) \triangleq \frac{s_k{(f)}}{s_1{(f)}},
\end{equation}
where $s_{i}(f)=
\max_{X,X'\colon dist(X,X')=i}
\lVert f(X)-f(X')\rVert_{1}$ counts the query sensitivity on the number of user records that differ, and $dist(\cdot,\cdot)$ counts this number.
\end{definition}

The function $d(k)$ quantifies how strongly the query output can change when
several correlated users modify their data.  For example,
\(d(k)=k\) for the \textsc{sum} or \textsc{mean} opeations, and
\(d(k)=1\) for the \textsc{max} or \textsc{min} opeations.
\begin{definition}[Aged correlation distance $\Delta_{k}$]
\label{def:tvk}
Let $\boldsymbol{x}^{\mathcal{K}}_{t}$ and
$\boldsymbol{x}'^{\mathcal{K}}_{t}$ be size-$k$ snapshots that differ in
the positions allowed by Definition~\ref{def:k}.
After aging by the AoI vector~$\boldsymbol{A}_{t}$, their maximal
total-variation distance is
\begin{equation}
\Delta_{k}(\mathbf{\Lambda},\boldsymbol{A}_{t})=
\max_{\boldsymbol{x}^{\mathcal{K}}_{t},\,
      \boldsymbol{x}'^{\mathcal{K}}_{t}} \delta
      \left( 
    \Pr[\boldsymbol{x}^{\mathcal{K}}_{t-A^{\mathcal{K}}_{t}}\mid\boldsymbol{x}^{\mathcal{K}}_{t}]
    - \Pr[\boldsymbol{x}^{\mathcal{K}}_{t-A^{\mathcal{K}}_{t}}\mid\boldsymbol{x}'^{\mathcal{K}}_{t}]
    \right),
\end{equation}
where $\delta(P, Q)=\sup_{X \in \mathcal{F}} |P(X) - Q(X)|$ is the total variation distance.
\end{definition}
The distance $\Delta_{k}$ measures how much information about the present
survives in the aged snapshot.  A larger AoI vector makes this distance
smaller because past states are less predictive of the current ones.
\begin{theorem}[Loose leakage bound]
\label{thm:loose}
Consider mechanism FRAN \ref{alg:2step_mechanism} with AoI
vector~$\boldsymbol{A}_{t}$ and Laplace noise level~$\epsilon_{C}$.
FRAN satisfies $(\epsilon_{S},\mathbf{\Lambda})$-CSDP with
\begin{equation}
\epsilon_{S}(\boldsymbol{A}_{t},\epsilon_{C})=
d(k)\,
\Delta_{k}(\mathbf{\Lambda},\boldsymbol{A}_{t})\,
\epsilon_{C}.
\label{eq:loose}
\end{equation}
\end{theorem}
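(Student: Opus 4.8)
The plan is to reduce the CSDP ratio in~\eqref{eq:csdp} to a ratio of two mixtures and then control it with two decoupled effects: the sensitivity amplification captured by $d(k)$ and the correlation decay captured by $\Delta_{k}$. Writing $g(\boldsymbol{y})=\Pr[f(\boldsymbol{y})+\eta\in S]$ for the Laplace acceptance probability on a fixed aged snapshot $\boldsymbol{y}$, and marginalising over the random aged snapshot produced in Phase~1, I would express
\[
\Pr[M(X_{1:T})\in S\mid\boldsymbol{x}_{t}]
=\sum_{\boldsymbol{y}}g(\boldsymbol{y})\,P(\boldsymbol{y}),
\qquad
P(\boldsymbol{y})=\Pr[\tilde{\boldsymbol{x}}_{t}=\boldsymbol{y}\mid\boldsymbol{x}_{t}],
\]
and likewise with $P'(\boldsymbol{y})=\Pr[\tilde{\boldsymbol{x}}_{t}=\boldsymbol{y}\mid\boldsymbol{x}'_{t}]$ for the neighbour. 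The target is therefore a bound on the mixture ratio $\mathbb{E}_{P}[g]/\mathbb{E}_{P'}[g]$.

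First I would establish the two ingredients separately. \emph{(i) Sensitivity side.} Since the $t$-neighbours differ in the single fresh entry $i_{0}$, and under $\mathbf{\Lambda}$ that entry is directly correlated with at most $k-1$ others (Definition~\ref{def:k}), the aged snapshots feeding $f$ can differ in at most $k$ coordinates. With noise scale $\Delta_{f}/\epsilon_{C}=s_{1}(f)/\epsilon_{C}$, the Laplace guarantee and the definition of $s_{k}(f)$ give, for any two such snapshots $\boldsymbol{y},\boldsymbol{y}'$, $g(\boldsymbol{y})\le\exp\!\big(\tfrac{\epsilon_{C}}{s_{1}(f)}\lVert f(\boldsymbol{y})-f(\boldsymbol{y}')\rVert_{1}\big)g(\boldsymbol{y}')\le e^{d(k)\epsilon_{C}}g(\boldsymbol{y}')$; hence $g$, viewed as a function of the $k$ correlated coordinates with the rest held fixed, has multiplicative range at most $e^{d(k)\epsilon_{C}}$. \emph{(ii) Correlation side.} Because changing $x^{(i_{0})}_{t}$ leaves the conditional law of the $s-k$ uncorrelated coordinates untouched, $P$ and $P'$ factor into a shared independent part and a differing correlated part, and the total-variation distance between the correlated parts equals $\Delta_{k}(\mathbf{\Lambda},\boldsymbol{A}_{t})$ by Definition~\ref{def:tvk}.

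I would then combine the two through the total-variation mixture argument underlying the ADP proposition. Decomposing the correlated parts as $(1-\Delta_{k})R+\Delta_{k}A$ and $(1-\Delta_{k})R+\Delta_{k}B$ and using the range bound $e^{d(k)\epsilon_{C}}$ on these coordinates, the ratio is maximised at the extreme split, giving $\mathbb{E}_{P}[g]/\mathbb{E}_{P'}[g]\le 1+\Delta_{k}\big(e^{d(k)\epsilon_{C}}-1\big)$; averaging over the shared coordinates preserves this. This is exactly the ADP proposition with $\epsilon_{C}$ replaced by the amplified budget $d(k)\epsilon_{C}$ and $\Delta(t)$ by $\Delta_{k}$, so taking logarithms yields $\epsilon_{S}=\ln\!\big(1+\Delta_{k}(e^{d(k)\epsilon_{C}}-1)\big)$, and the product form~\eqref{eq:loose} is its first-order reading via $\ln(1+x)\approx x$ and $e^{y}-1\approx y$.

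I expect two main obstacles. The delicate bookkeeping step is the factorisation in~(ii): I must argue rigorously that the correlation degree $k$ controls \emph{both} how many coordinates of $f$ can move (the $d(k)$ factor) \emph{and} which coordinates of the conditional law are perturbed (the $\Delta_{k}$ factor), so that the two effects act on the same $k$-subset and multiply rather than compound. The second, more conceptual point is that~\eqref{eq:loose} is a linearisation rather than a strict bound---for small $\Delta_{k}$ the exact value $\ln(1+\Delta_{k}(e^{d(k)\epsilon_{C}}-1))$ slightly exceeds $d(k)\Delta_{k}\epsilon_{C}$---so I would state the result in the small-budget regime or carry the exact logarithmic bound and present the product as its leading term. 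A pleasant by-product is the counterintuitive dispersal effect: stronger coupling spreads the single-record perturbation over more sequences, which can shrink $\Delta_{k}$ faster than $d(k)$ grows.
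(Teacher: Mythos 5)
Your proposal follows essentially the same route as the paper's own proof: a law-of-total-probability mixture over aged snapshots, the Laplace ratio bound amplified by the $k$-sensitivity $d(k)$, a total-variation split controlled by $\Delta_{k}$, and the combined bound $\ln\bigl(1+\Delta_{k}(e^{d(k)\epsilon_{C}}-1)\bigr)$ before passing to the product form. Your caution about the final linearisation is in fact sharper than the paper's own treatment: the paper asserts $\ln(1+x(e^{y}-1))\le xy$ ``for appropriate ranges,'' but this inequality fails precisely in the small-$\Delta_{k}$ regime (where $\ln(1+x(e^{y}-1))\approx x(e^{y}-1)>xy$), so your suggestion to carry the exact logarithmic bound and present the product as its leading-order term is the more defensible statement of the result.
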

\begin{proof}[Proof sketch]
Couple any pair of $t$-neighbouring databases so that at most $k$ correlated
users differ, apply the standard Laplace-DP argument with the
$k$-sensitivity~$d(k)$, and upper-bound the residual dependence by
$\Delta_{k}$.  Multiplying the three factors yields~\eqref{eq:loose}. For the details of the proof, see Appendix~\ref{appendix:thm1}.
\end{proof}

Equation~\eqref{eq:loose} highlights the design trade-offs.
Increasing AoI shrinks $\Delta_{k}$, thereby reducing leakage but at the cost
of staler information.  Decreasing the noise level~$\epsilon_{C}$ improves
utility yet enlarges $\epsilon_{S}$.  Finally, queries with large
$k$-sensitivity $d(k)$ are intrinsically harder to protect.  Although the
bound may be pessimistic when dependencies are weak, its closed-form nature
makes it valuable for preliminary tuning and for applications requiring
worst-case guarantees.

\subsection{Tight Privacy--Leakage Bound}
\label{subsec:tight}
The loose bound in Section~\ref{subsec:loose} is quick to evaluate but can
grossly overestimate leakage when correlations are weak.
Here we tighten the estimate by working directly with conditional
distributions, at the cost of more computation and a need for finer knowledge
of the dependency model~$\mathbf{\Lambda}$.
\begin{definition}[Bounded aged correlation $\bar{\Delta}$]
\label{def:bdtv}
Fix an AoI vector $\boldsymbol{A}_{t}$.
For any user $i$, let $\boldsymbol{x}_{-i}$ collect all other users' data.
The \emph{maximal bounded total-variation distance} is
\begin{align} \label{eq:bdtv}
    \bar{\Delta}(\mathbf{\Lambda}, \boldsymbol{A}_t) &= \max_{(\boldsymbol{x}_{-i}, x_i), (\boldsymbol{x}_{-i}, x_i')} \delta \Big( \bar{g}(\boldsymbol{x}_{-i}, x_i) \Pr[z_i \mid x_i], \nonumber\\ 
    &\quad \underline{g}(\boldsymbol{x}_{-i}, x_i') \Pr[z_i \mid x_i'] \Big),
\end{align}

where:
\begin{equation}
    \bar{g}(\boldsymbol{x}_{-i}, x_i) = \max_{\boldsymbol{z}_{-i}, z_i} \frac{\Pr[x_i \mid \boldsymbol{x}_{-i}, \boldsymbol{z}_{-i}] \Pr[\boldsymbol{x}_{-i} \mid x_i, z_i]}{\Pr[x_i \mid \boldsymbol{x}_{-i}] \Pr[\boldsymbol{x}_{-i} \mid x_i]},
\end{equation}
and
\begin{equation}
    \underline{g}(\boldsymbol{x}_{-i}, x_i) = \min_{\boldsymbol{z}_{-i}, z_i} \frac{\Pr[x_i \mid \boldsymbol{x}_{-i}, \boldsymbol{z}_{-i}] \Pr[\boldsymbol{x}_{-i} \mid x_i, z_i]}{\Pr[x_i \mid \boldsymbol{x}_{-i}] \Pr[\boldsymbol{x}_{-i} \mid x_i]},
\end{equation}
where aged variables $z_{i}$ and $\boldsymbol{z}_{-i}$ correspond to $x_{i}$ and $\boldsymbol{x}_{-i}$ shifted back by the AoI vector $\boldsymbol{A}_{t}$. All of them are in the simplified form  for better readability.
\end{definition}
Equation~\eqref{eq:bdtv} measures the worst mismatch between \emph{conditional}
distributions that differ at only one fresh record while honouring all
dependencies and the chosen AoI policy.
\begin{theorem}[Tight leakage bound]
\label{thm:tight}
FRAN satisfies $(\epsilon^{\text{tight}}_{S},\mathbf{\Lambda})$-CSDP with
\begin{equation}
\epsilon_{S}^{\text{tight}}(\boldsymbol{A}_{t},\epsilon_{C})=
\bar{\Delta}(\mathbf{\Lambda},\boldsymbol{A}_{t})\,
\epsilon_C.
\label{eq:tight}
\end{equation}
\end{theorem}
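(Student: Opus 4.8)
The plan is to bound the CSDP privacy loss $\ln\!\bigl(\Pr[M(X_{1:T})\in S\mid\boldsymbol{x}_t]/\Pr[M(X'_{1:T})\in S\mid\boldsymbol{x}'_t]\bigr)$ directly through conditional distributions over the aged snapshot, rather than through the crude sensitivity-times-correlation product underlying the loose bound of Theorem~\ref{thm:loose}. Fix a pair of $t$-neighbouring databases differing only at source $i_0=i$, so the current records satisfy $x_i\neq x_i'$ while $\boldsymbol{x}_{-i}$ is shared. Since FRAN first ages the data and then releases $f(\tilde{\boldsymbol{x}}_t)+\eta$ with $\eta\sim\mathrm{Lap}(\Delta_f/\epsilon_C)$, I would first invoke the law of total probability over the aged snapshot $\boldsymbol{z}=\tilde{\boldsymbol{x}}_t=(z_i,\boldsymbol{z}_{-i})$, writing $\Pr[M(X_{1:T})\in S\mid\boldsymbol{x}_t]=\sum_{\boldsymbol{z}}\Pr[\boldsymbol{z}\mid\boldsymbol{x}_t]\,g_S(\boldsymbol{z})$ with $g_S(\boldsymbol{z})=\Pr[f(\boldsymbol{z})+\eta\in S]$. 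This cleanly separates the two sources of randomness: the correlation-driven distribution of past states given the present, and the injected Laplace noise.

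Next, I would factorise the conditional aged distribution $\Pr[z_i,\boldsymbol{z}_{-i}\mid x_i,\boldsymbol{x}_{-i}]$ via Bayes' rule and the conditional-independence structure of the CMC, isolating the purely temporal link $\Pr[z_i\mid x_i]$ of the differing source and expressing the cross-sequence contribution of the untouched coordinates $\boldsymbol{z}_{-i}$ as a likelihood ratio. The crux is that this cross-sequence ratio—precisely the quantity appearing inside the $\max$ and $\min$ of Definition~\ref{def:bdtv}—is sandwiched between $\underline{g}(\boldsymbol{x}_{-i},\cdot)$ and $\bar{g}(\boldsymbol{x}_{-i},\cdot)$ uniformly over all aged configurations $(\boldsymbol{z}_{-i},z_i)$. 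Replacing the numerator's factor by its upper envelope $\bar{g}$ and the denominator's by its lower envelope $\underline{g}$ collapses the whole comparison onto the two reweighted temporal measures $\bar{g}(\boldsymbol{x}_{-i},x_i)\Pr[z_i\mid x_i]$ and $\underline{g}(\boldsymbol{x}_{-i},x_i')\Pr[z_i\mid x_i']$ over the single coordinate $z_i$, whose worst-case total-variation gap over $(\boldsymbol{x}_{-i},x_i,x_i')$ is by construction exactly $\bar{\Delta}(\mathbf{\Lambda},\boldsymbol{A}_t)$.

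The final step is to fold in the noise. Since altering the single aged coordinate $z_i$ changes $f$ by at most the sensitivity $\Delta_f$, the Laplace mechanism with scale $\Delta_f/\epsilon_C$ bounds the ratio of the output probabilities $g_S$ by $e^{\epsilon_C}$ for any such change. I would then control the ratio of the two $z_i$-weighted integrals of $g_S$ by coupling the reweighted temporal measures: on the probability mass they share the $g_S$-contributions coincide, whereas on the at-most-$\bar{\Delta}$ mass where they differ each unit incurs a single factor $e^{\epsilon_C}$. This mirrors the total-variation argument behind the ADP Proposition, but with the conditional, cross-sequence-aware distance $\bar{\Delta}$ in place of the marginal $\Delta(t)$, and yields the stated composite privacy loss $\bar{\Delta}\,\epsilon_C$ of \eqref{eq:tight}; the maxima over $i$, over the context $\boldsymbol{x}_{-i}$, and over the output set $S$ are already absorbed into $\bar{\Delta}$.

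The step I expect to be the main obstacle is the factorisation and sandwiching of the second paragraph. Making the ``simplified form'' of Definition~\ref{def:bdtv} rigorous requires spelling out exactly which conditional independences the CMC supplies, verifying that the cross-sequence likelihood ratio genuinely lies in $[\underline{g},\bar{g}]$ uniformly over every aged configuration, and checking that replacing it by its envelopes loses no more than what $\bar{\Delta}$ already records. A secondary difficulty is tracking the constant in the combining step so that the composite loss is genuinely $\bar{\Delta}\,\epsilon_C$ rather than the looser surrogate $\ln\!\bigl(1+\bar{\Delta}(e^{\epsilon_C}-1)\bigr)$; confirming that this tight linear form holds—and that it strictly improves on the product $d(k)\,\Delta_k$ of Theorem~\ref{thm:loose}—is what justifies calling \eqref{eq:tight} the tight bound.
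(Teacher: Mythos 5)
Your proposal follows essentially the same route as the paper's proof: total probability over the aged snapshot, Bayes factorisation of $\Pr[z_i,\boldsymbol{z}_{-i}\mid x_i,\boldsymbol{x}_{-i}]$ into the temporal link $\Pr[z_i\mid x_i]$ times cross-sequence likelihood ratios that are sandwiched by $\underline{g}$ and $\bar{g}$, reduction to the total-variation gap $\bar{\Delta}(\mathbf{\Lambda},\boldsymbol{A}_t)$, and a final coupling step with the Laplace factor $e^{\epsilon_C}$. The caveat you flag at the end---whether the composite loss is genuinely the linear form $\bar{\Delta}\,\epsilon_C$ rather than $\ln\bigl(1+\bar{\Delta}(e^{\epsilon_C}-1)\bigr)$---is precisely the step the paper itself treats most lightly, so your instinct about where the remaining difficulty lies is accurate.
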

\begin{proof}[Proof sketch]
Start from Definition~\ref{def:csdp}, expand each likelihood ratio, bound it
by the distance in Definition~\ref{def:bdtv}, and multiply by the Laplace
noise factor $\epsilon_C$ to obtain~\eqref{eq:tight}. To save space, we put the detailed proof of Theorem~\ref{thm:tight} in Appendix~\ref{appendix:thm2}.
\end{proof}

When correlations decay rapidly with AoI, the tight bound in
\eqref{eq:tight} can be orders of magnitude smaller than the loose bound,
yielding a less conservative privacy budget.
Computing $\bar{\Delta}$, however, involves either a convex programme
(in Markov settings) or Monte-Carlo sampling, so runtime grows with the
alphabet and network size.
A practical workflow is to design parameters with the loose bound for speed,
then certify the final choice with the tight bound to avoid unnecessary
noise.

\subsection{Comparative Discussion and Practical Implications}
\label{subsec:compare}

The two privacy–leakage bounds derived above are complementary rather than
competitive.  Both satisfy CSDP, yet each serves a different purpose because
they rely on different levels of information.

\begin{itemize}
\item \textbf{Loose bound (Theorem~\ref{thm:loose}).}  
      It depends only on four coarse-grained quantities:
      the correlation degree \(k\),
      the \(k\)-sensitivity \(d(k)\) of the query,
      the aged total-variation distance \(\Delta_{k}\),
      and the Laplace scale \(\epsilon_{C}\).
      None of these requires full knowledge of the joint distribution
      \(\mathbf{\Lambda}\), so the bound can be evaluated in closed form and
      in negligible time—ideal for rapid parameter sweeps or on-device
      adjustment.

\item \textbf{Tight bound (Theorem~\ref{thm:tight}).}  
      It replaces \(d(k)\,\Delta_{k}\) with the conditional metric
      \(\bar{\Delta}(\mathbf{\Lambda},\boldsymbol{A}_{t})\),
      computed under the complete dependency model and the chosen
      AoI vector.
      This yields a much sharper estimate—often one to two orders of
      magnitude smaller when correlations decay quickly with AoI.
      The drawback is computational: evaluating \(\bar{\Delta}\) entails a
      convex programme (for Markov models) or Monte-Carlo sampling, with
      runtime that grows with the alphabet size and the number of sources.
\end{itemize}

These observations suggest a two-phase engineering practice.  
First, employ the loose bound as a \emph{screen} to discard parameter sets
that clearly violate the privacy budget.  
Then apply the tight bound to the survivors, injecting noise only where it
is truly needed.  
This strategy achieves the desired budget without paying the high
computational cost of the tight analysis across the entire design space.

\section{Privacy-Utility Trade-off under CMC}

In this section, we present a comprehensive analysis of the privacy-utility trade-off in CSDP, with specific focus on CMC models. Our investigation begins with theoretical foundations connecting privacy leakage to utility degradation within the CMC framework. We then analyze how noise introduction affects utility, measured through Mean Squared Error (MSE). The section concludes with investigating optimization strategies designed to achieve an optimal balance between privacy preservation and data utility.


\subsection{Privacy Leakage Level under the CMC Model}
\label{subsec:cmc-leakage}
Under the coupling Markov chain (CMC) model described in Section~III.B, the privacy leakage and data utility of the FRAN mechanism are inherently linked to three key factors: (i)~the coupling strength between data sequences, (ii)~the spectral properties of the underlying Markov chain, and (iii)~the magnitude of the noise injected to protect privacy.

The primary challenge is identifying an optimal balance between noise level (which preserves privacy) and the utility of released data for downstream tasks, such as prediction or classification. To formalize this, we first analyze the combined effects of correlation and aging on the total-variation distance parameter.


\begin{definition}[CMC Leakage Coefficient]
\label{def:cmc-coeff}
Given an AoI vector $\boldsymbol{A}_{t}$ at time $t$ and a coupling strength parameter $\lambda$, the \emph{CMC leakage coefficient} is defined as
\begin{equation}
    \Phi(\lambda,\boldsymbol{A}_{t}) = \Delta_{k}\bigl(\mathbf{\Lambda}(\lambda),\boldsymbol{A}_{t}\bigr),
\end{equation}
where $\mathbf{\Lambda}(\lambda)$ denotes the family of joint distributions induced by all CMCs whose coupling strengths are bounded by $\lambda$, and $\Delta_{k}$ is the aged total-variation distance from Definition~\ref{def:tvk}.
\end{definition}

\subsection{Privacy Leakage under CMC}

To specialize the loose leakage bound in Theorem~\ref{thm:loose} for Coupling Markov Chains (CMC), we define a model-aware leakage coefficient \( \Phi(\lambda, \boldsymbol{A}_t) \), which captures the residual correlation between aged data and the current state under coupling strength \( \lambda \).

\begin{proposition}[Privacy Leakage under CMC]
\label{prop:cmc-leakage}
FRAN in Algorithm~\ref{alg:2step_mechanism} satisfies \( (\epsilon_{S}, \mathbf{\Lambda}) \)-CSDP under CMC with
\begin{equation}
\epsilon_{S}^{\text{CMC}}(\boldsymbol{A}_t, \epsilon_C) = d(k)\, \Phi(\lambda, \boldsymbol{A}_t)\, \epsilon_C.
\label{eq:cmc-leakage}
\end{equation}
\end{proposition}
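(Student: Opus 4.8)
The plan is to obtain Proposition~\ref{prop:cmc-leakage} as a direct specialization of the loose leakage bound in Theorem~\ref{thm:loose}. The key observation is that the CMC leakage coefficient $\Phi(\lambda,\boldsymbol{A}_{t})$ is defined in Definition~\ref{def:cmc-coeff} to equal exactly the aged total-variation distance $\Delta_{k}(\mathbf{\Lambda}(\lambda),\boldsymbol{A}_{t})$ evaluated on the family of joint distributions generated by CMCs whose coupling strength is bounded by $\lambda$. Hence, once I confirm that the CMC-induced family $\mathbf{\Lambda}(\lambda)$ is a legitimate instance of the general dependency family $\mathbf{\Lambda}$ for which Theorem~\ref{thm:loose} holds, the claimed formula~\eqref{eq:cmc-leakage} follows by substitution.

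First I would verify that the joint distribution over snapshots $\{\boldsymbol{x}_{t}\}$ produced by the block transition matrix $\boldsymbol{Q}$ satisfies both correlation types (intra- and inter-sequence) formalized in Section~III, so that it lies in $\mathbf{\Lambda}$. Concretely, the evolution $\boldsymbol{\pi}_{n+1}=\boldsymbol{Q}\boldsymbol{\pi}_{n}$ induces the temporal self-dependence through the diagonal blocks $\lambda_{jj}\boldsymbol{P}^{(jj)}$ and the cross-sequence dependence through the off-diagonal blocks $\lambda_{jk}\boldsymbol{P}^{(jk)}$. This places every CMC of coupling strength at most $\lambda$ inside the admissible family, giving the inclusion $\mathbf{\Lambda}(\lambda)\subseteq\mathbf{\Lambda}$, so the hypotheses of Theorem~\ref{thm:loose} apply verbatim to $\mathbf{\Lambda}(\lambda)$.

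Next I would invoke Theorem~\ref{thm:loose} with the dependency family set to $\mathbf{\Lambda}(\lambda)$, yielding that FRAN satisfies $(\epsilon_{S},\mathbf{\Lambda}(\lambda))$-CSDP with $\epsilon_{S}=d(k)\,\Delta_{k}(\mathbf{\Lambda}(\lambda),\boldsymbol{A}_{t})\,\epsilon_{C}$. Applying the defining identity $\Delta_{k}(\mathbf{\Lambda}(\lambda),\boldsymbol{A}_{t})=\Phi(\lambda,\boldsymbol{A}_{t})$ from Definition~\ref{def:cmc-coeff} rewrites this expression exactly as~\eqref{eq:cmc-leakage}, completing the argument.

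The main obstacle I anticipate is not the substitution itself but justifying that the worst-case $t$-neighbour coupling used inside Theorem~\ref{thm:loose}—where at most $k$ correlated users differ—is consistent with the CMC correlation structure, and that the maximum defining $\Delta_{k}$ over $\mathbf{\Lambda}(\lambda)$ is well posed. In particular, I would lean on the spectral-stability property of $\boldsymbol{Q}$ (dominant eigenvalue $1$, all others bounded in modulus) to guarantee that $\Delta_{k}(\mathbf{\Lambda}(\lambda),\boldsymbol{A}_{t})$ is finite and decays with the AoI vector, so that $\Phi(\lambda,\boldsymbol{A}_{t})$ is a well-defined, nonincreasing function of aging. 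Establishing this well-posedness—rather than the closed-form identity—is where most of the care is required, even though the final formula is immediate.
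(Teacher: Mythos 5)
Your proposal matches the paper's own proof: both obtain the result by invoking Theorem~\ref{thm:loose} on the CMC-induced family $\mathbf{\Lambda}(\lambda)$ and then substituting the defining identity $\Phi(\lambda,\boldsymbol{A}_{t})=\Delta_{k}(\mathbf{\Lambda}(\lambda),\boldsymbol{A}_{t})$ from Definition~\ref{def:cmc-coeff}. Your additional remarks on verifying $\mathbf{\Lambda}(\lambda)\subseteq\mathbf{\Lambda}$ and on the well-posedness of $\Delta_{k}$ via spectral stability are sensible extra care, but they do not change the route, which is essentially identical to the paper's.
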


\begin{proof}[Proof sketch]
Directly follows from Theorem~\ref{thm:loose} by setting \( \Phi(\lambda, \boldsymbol{A}_t) := \Delta_k(\Lambda(\lambda), \boldsymbol{A}_t) \). See Appendix~\ref{appendix:prop2}.
\end{proof}

Compared to Equation~\eqref{eq:loose}, this formulation makes explicit the role of CMC parameters. It highlights that stronger coupling or fresher data increases leakage, while larger AoI or weaker dependencies reduce it—enabling more informed privacy-utility trade-offs.

\subsection{Utility Function and Impact of Privacy Noise}
\label{subsec:utility-noise}

To measure the utility loss caused by the privacy-preserving mechanism, we use the mean squared error (MSE) between the true data query outcome and the corresponding noisy result produced by the mechanism. Specifically, given a query function~$f(\cdot)$ applied to the true data~$\boldsymbol{x}_t$, the MSE at time~$t$ with noise scale~$\epsilon_{C}$ is defined as:
\begin{equation}
l_{\mathrm{MSE}}(\boldsymbol{A}_t,\epsilon_{C}) = \mathbb{E}\left[\,\|M_{S}(X_{1:t})-f(\boldsymbol{x}_t)\|^{2}\right],
\label{eq:mse}
\end{equation}
where the expectation is taken over the randomness introduced by mechanism FRAN.




From equation~\eqref{eq:mse}, decreasing~$\epsilon_{C}$ (injecting more noise) directly results in higher MSE and lower utility. In CSDP, this utility degradation is amplified by coupling among data sequences. Stronger correlations mean even moderate noise can severely disrupt the underlying relational structure, inflating the MSE disproportionately. Weaker coupling tends to mitigate this impact, resulting in smaller utility losses for the same noise level.

Therefore, understanding the interplay between coupling strength and privacy noise is critical. Mechanism designers must balance these factors, adjusting noise levels according to correlation strength, to maintain acceptable data utility while providing adequate privacy protection.

\subsection{Optimization Problem Formulation}
Our optimization strategy aims to minimize privacy leakage while maintaining Mean Squared Error (MSE) within acceptable bounds for practical applications. We formulate this as Problem P1:
\begin{align}
\text{P1:} \quad \min_{\boldsymbol{A}_t,\epsilon_{C}} \quad & \epsilon_S(\boldsymbol{A}_t, \epsilon_C) \\
\text{s.t.} \quad & l_{\text{MSE}}(\boldsymbol{A}_t, \epsilon_C) \leq \bar{l}
\end{align}
where $l_{\text{MSE}}(\boldsymbol{A}_t, \epsilon_C)$ represents the expected MSE, and $\bar{l}$ denotes the target utility threshold. 

The objective is to minimize privacy leakage while ensuring data utility remains within specified bounds. By solving this optimization problem, we can determine the optimal noise parameters and sequence segmentation strategies that effectively balance privacy and utility, creating an adaptive mechanism that respects the data's structural characteristics while maintaining privacy guarantees.

In the numerical study section, we present a comprehensive optimization approach that balances privacy and utility within the CSDP framework.

\begin{figure*}[t]
    \centering
    \subfloat[]{\includegraphics[width=0.32\textwidth]{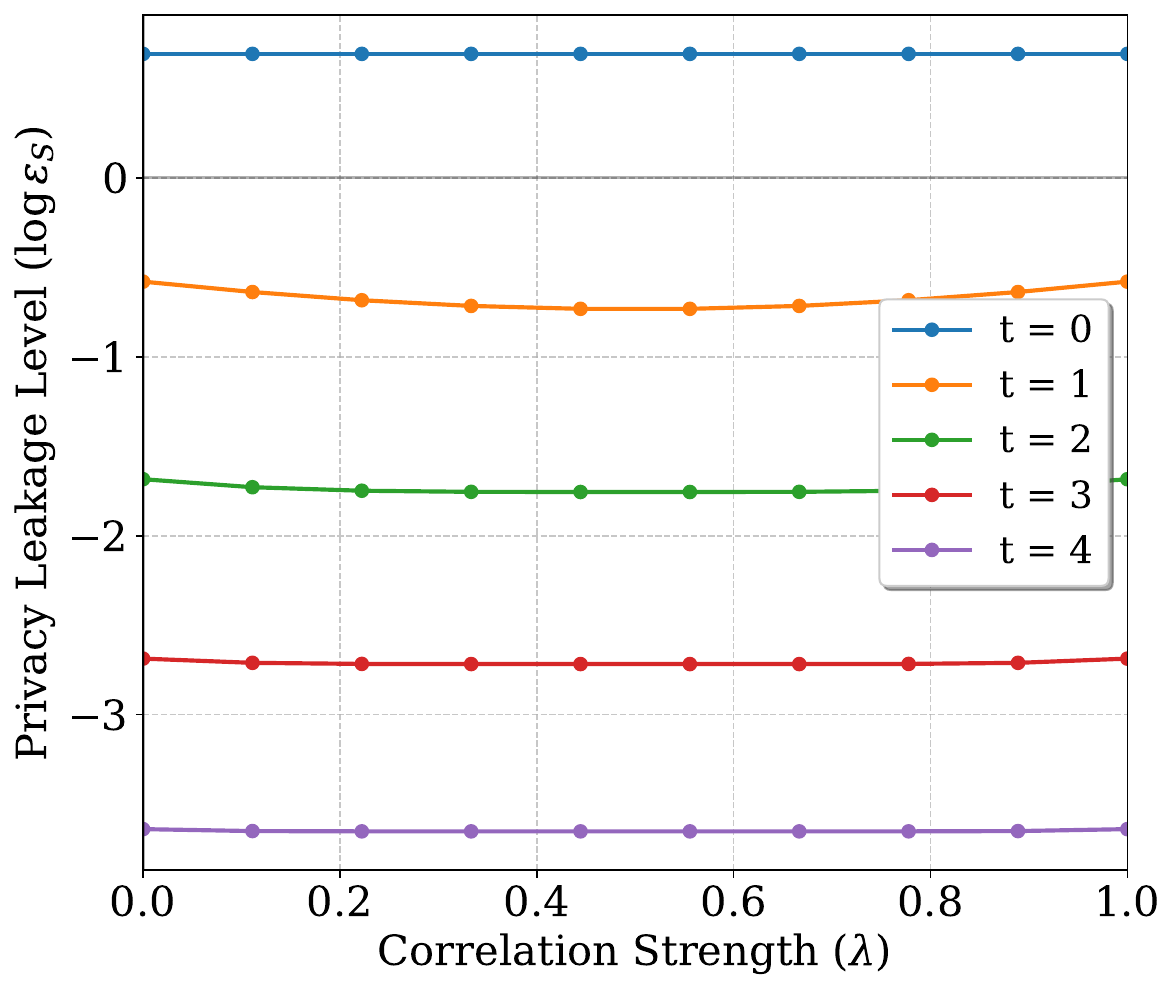}}
    \subfloat[]{\includegraphics[width=0.32\textwidth]{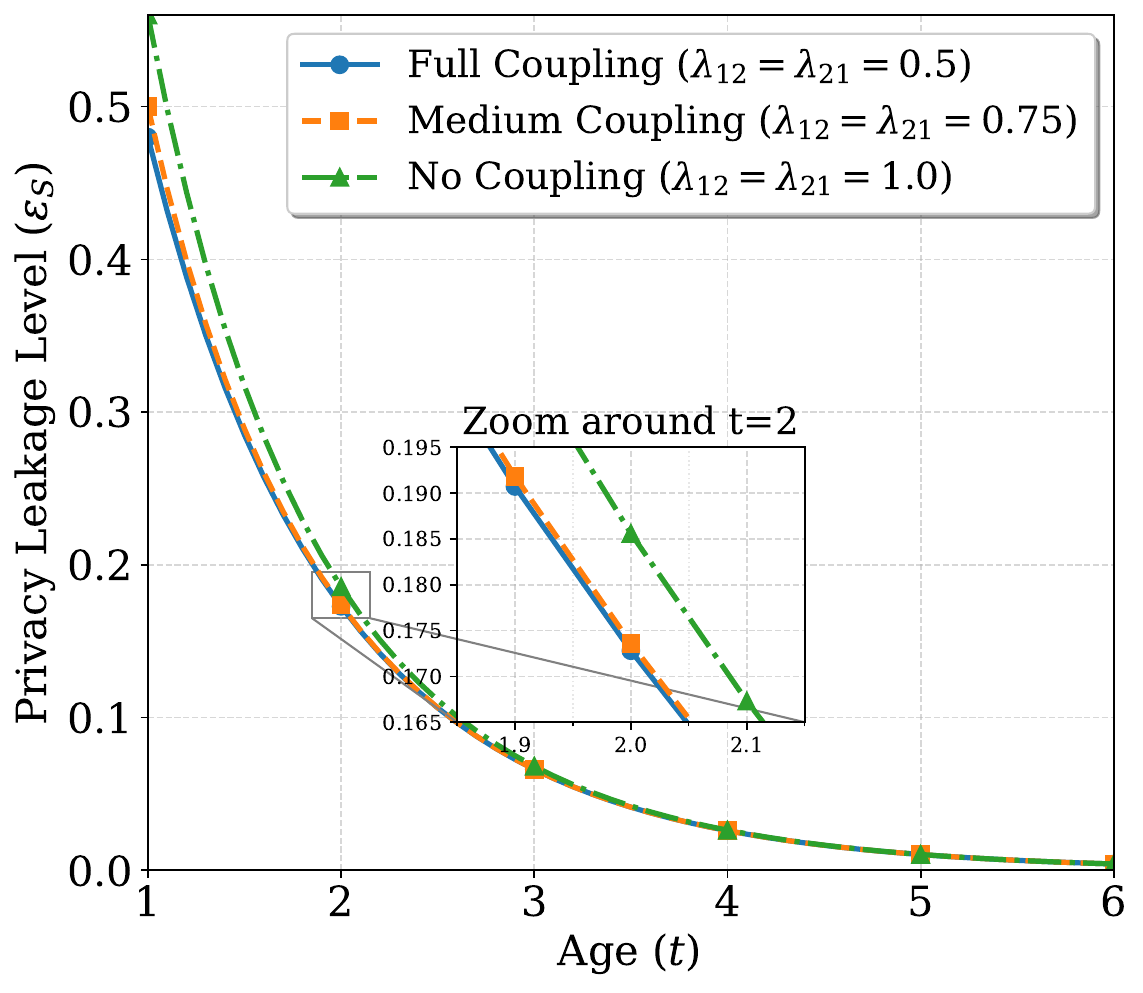}}
    \subfloat[]{\includegraphics[width=0.32\textwidth]{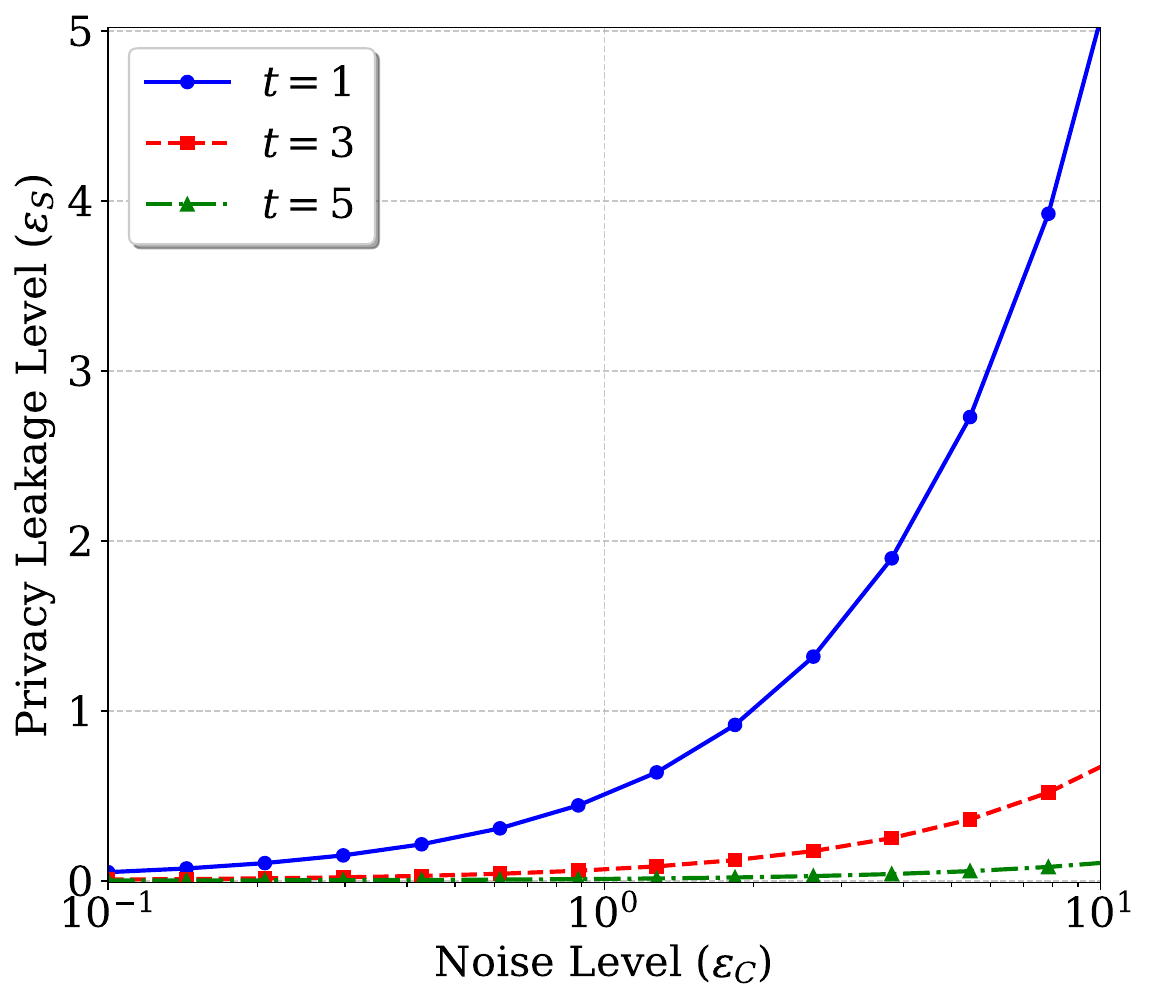}}
    \caption{Privacy leakage level v.s. different parameters. We set $p = q = 0.3$ and $\lambda = \lambda_{11} = \lambda_{22}$ across all experiments.}
    \vspace{-5pt}  
    \label{fig:combined_analysis}
\end{figure*}

\section{Numerical Study}
\label{sec:numerical}


In this section, we empirically validate our theoretical results through numerical experiments on a fundamental case study: a two-user system with binary states. This simplified scenario represents the essential elements of our CSDP framework while allowing for clear illustration of the key principles.

\subsection{Experimental Setup}
\label{subsec:setup}

We consider a system with two users ($s = 2$) each generating a binary sequence ($m = 2$) with states $\{0, 1\}$. We model the correlation using our CMC framework with transition matrices $P^{(jk)}$ and coupling parameters $\lambda_{jk}$ and set:
\begin{equation}
P^{(11)} = P^{(12)} = P^{(21)} =  P^{(22)} = 
\begin{bmatrix}
0.7 & 0.3 \\
0.3 & 0.7
\end{bmatrix}.
\end{equation}

To examine the effect of varying correlation strengths, we systematically adjust the self-coupling parameters $\lambda_{11} = \lambda_{22} = \lambda$ across the range of $[0,1]$. 
Under this setting, the coupling strength is symmetric around $\lambda = 0.5$, meaning that $\lambda$ and $1-\lambda$ result in equivalent correlation strengths.

\textbf{Query Function:} We implement the \textsc{MEAN} query:
\begin{itemize}
    \item Mean query: $f_{\text{mean}}(x_t) = \frac{1}{2}(x_t^{(1)} + x_t^{(2)})$. Thus, the query sensitivity $d(2)=2$, , with query sensitivity $d(2) = 2$.
\end{itemize}

\textbf{Baseline Methods:} We compare CSDP against three established privacy mechanisms:
\begin{itemize}
    \item \textit{Standard DP\cite{dwork2006differential}}: Applies calibrated Laplace noise without accounting for data correlations.
    \item \textit{Age-DP (ADP)\cite{Zhang2023}}: Dynamically adjusts the privacy budget based on data staleness.
    \item \textit{Dependent DP (DDP)\cite{zhao2017dependent}}: Accounts for spatial correlations but neglects temporal dependencies.
\end{itemize}
\textbf{Evaluation Metrics:} We assess performance using:
\begin{itemize}
    \item Privacy Leakage Level ($\epsilon_S$): Quantified using the loose bound\footnote{Our analysis uses the loose bound for computational efficiency. While the tight bound would likely yield better privacy guarantees, it requires using Monte-Carlo estimation for each evaluation. Comparing these bounds under various correlation structures remains for future work.} established in Theorem 2.
    \item Mean Squared Error (MSE): Measured as $l_{MSE}$ according to Equation \ref{eq:mse}.
\end{itemize}

\subsection{Privacy Leakage Analysis}
\label{subsec:privacy}

\begin{figure*}[t]
    \centering
    \subfloat[]{\includegraphics[width=0.32\textwidth]{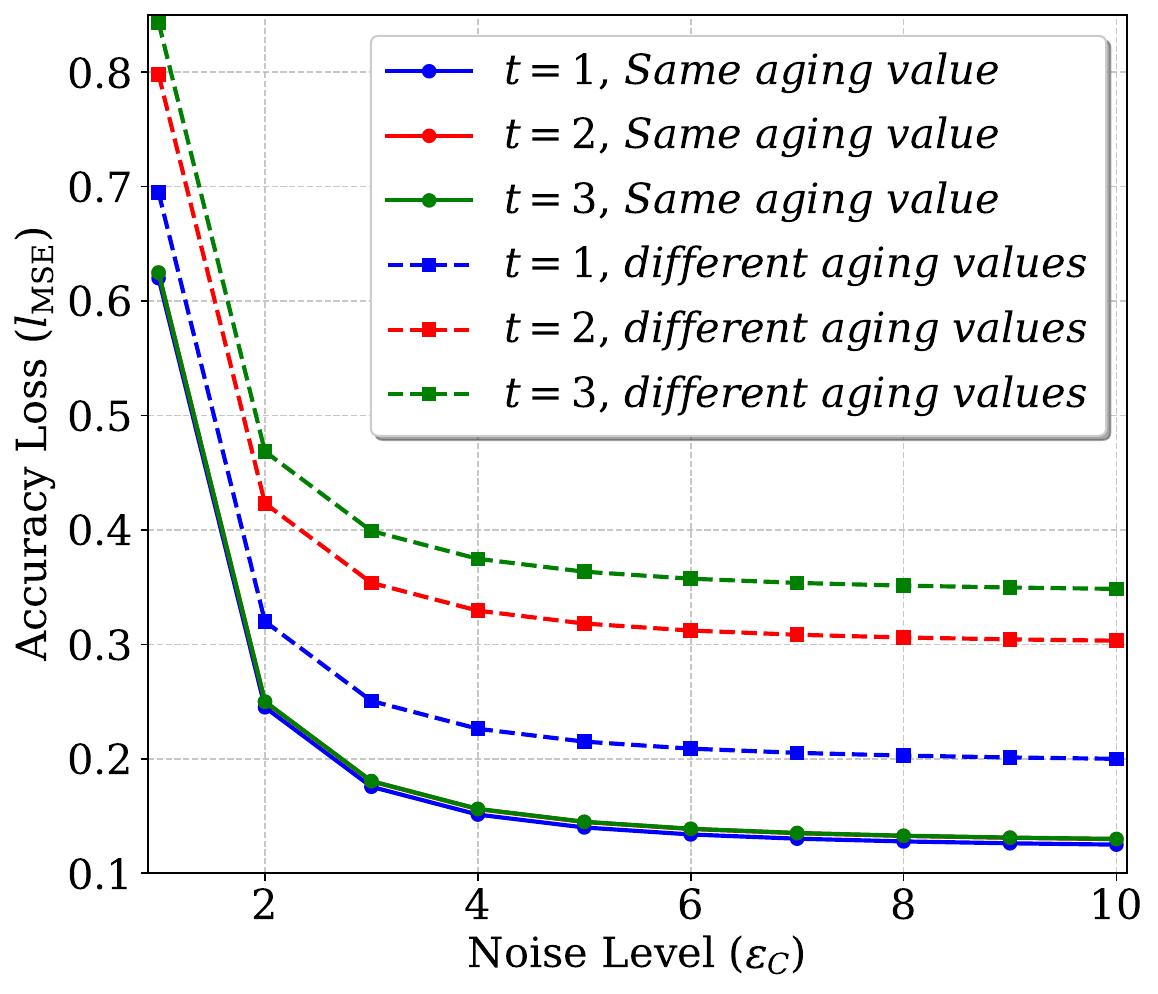}}
    \subfloat[]{\includegraphics[width=0.32\textwidth]{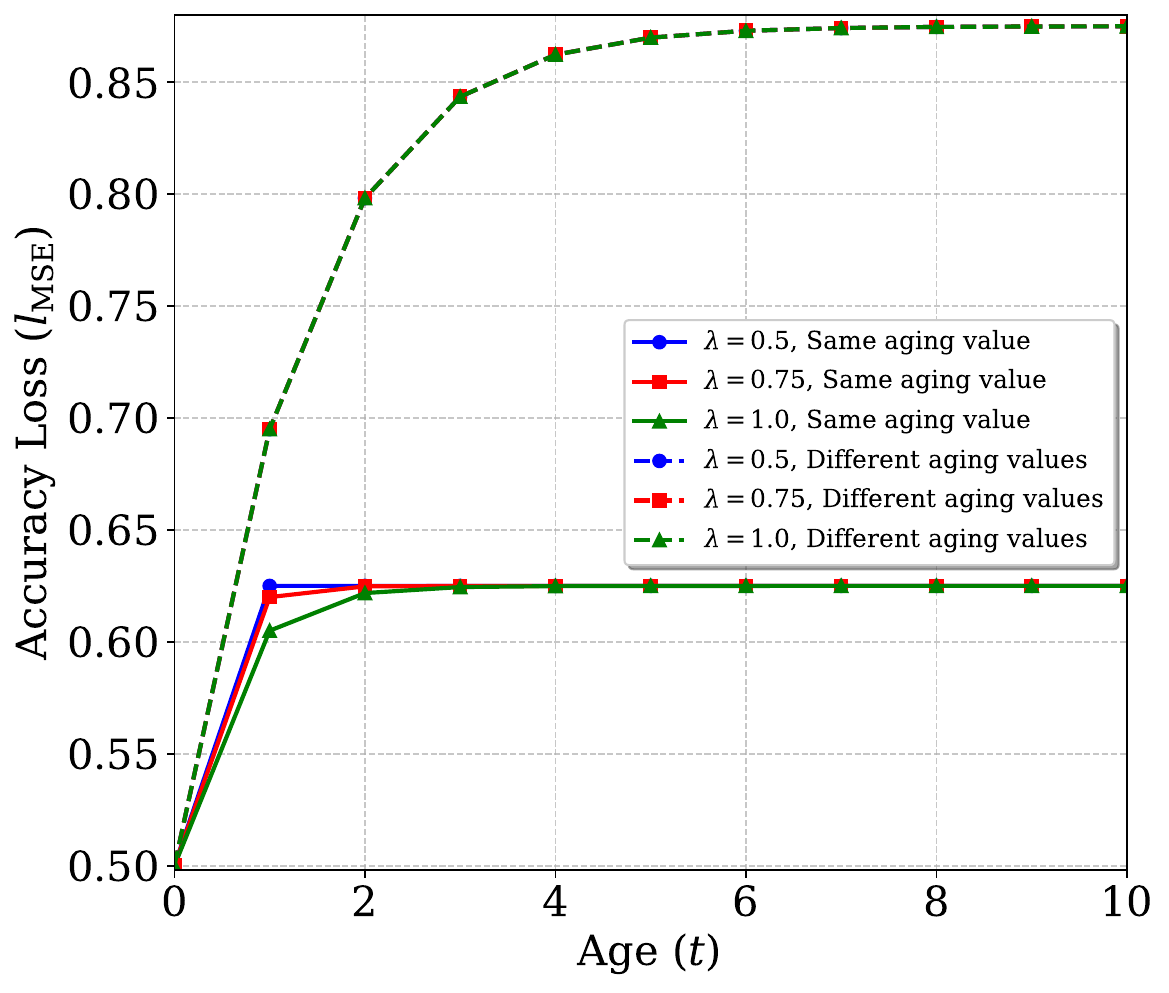}}
    \subfloat[]{\includegraphics[width=0.32\textwidth]{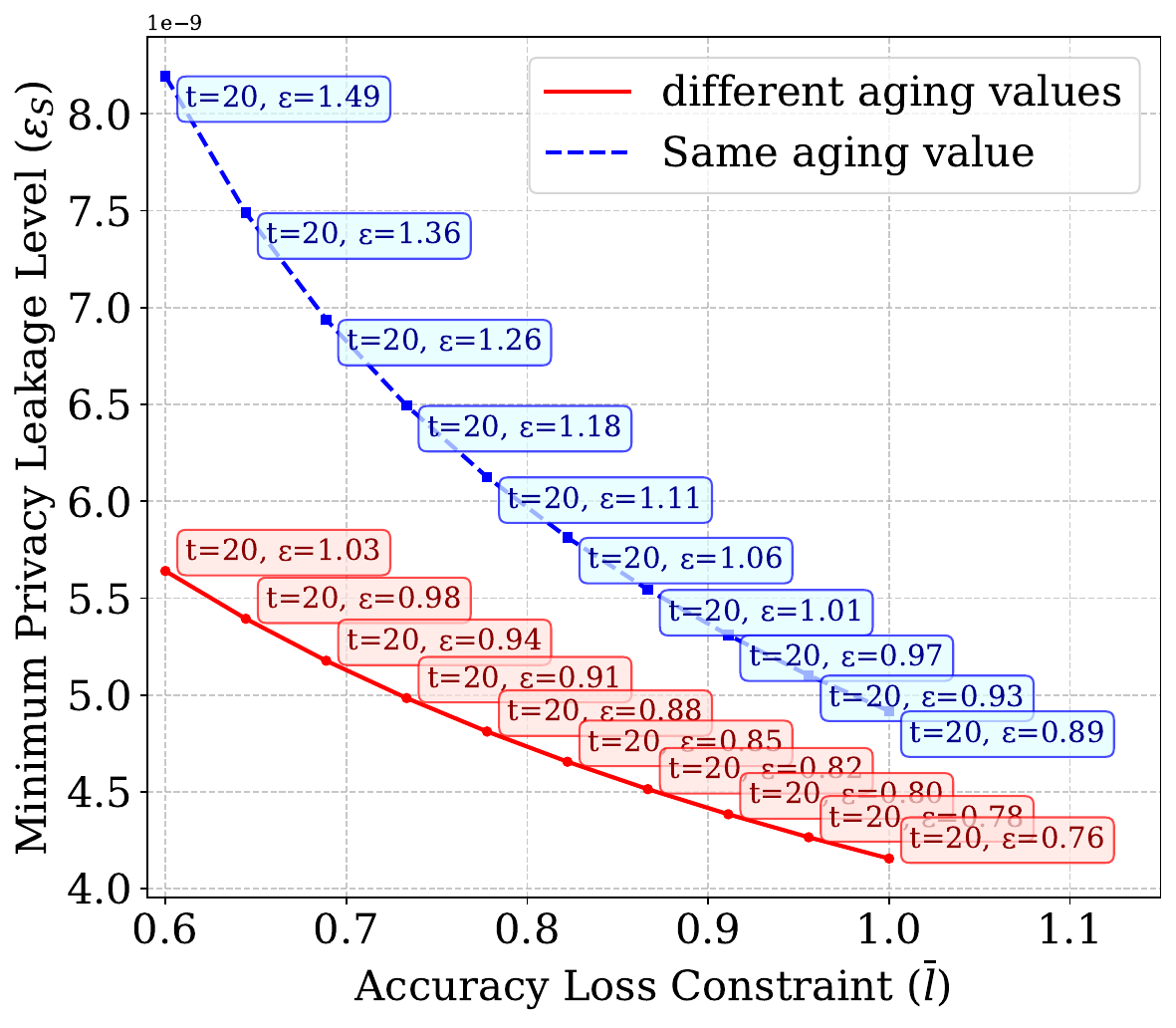}}
    \caption{Comprehensive analysis of privacy-utility trade-offs in CSDP framework. We set $p = q = 0.3$ and $\lambda = 0.75$ across all experiments.}
    \vspace{-5pt}  
    \label{fig:combined_analysis_utility}
\end{figure*}

Fig.~\ref{fig:combined_analysis} illustrates the privacy-utility characteristics of our CSDP framework through three key relationships that demonstrate how privacy protection evolves across different system parameters.


Fig.~\ref{fig:combined_analysis}(a) reveals the complex interplay between correlation strength, temporal dynamics, and privacy leakage. Privacy leakage decreases substantially (by approximately 70\%) from $t = 0$ to $t = 4$, confirming our theoretical prediction that privacy risks naturally decay over time. The relationship between correlation strength and privacy leakage follows a U-shaped curve with minimal leakage at $\lambda = 0.5$ (maximal coupling). This counter-intuitive result occurs because stronger coupling disperses perturbations across sequences, effectively diluting individual changes and making them harder to detect. A particularly important insight is that coupling influence diminishes significantly as time progresses, with differences between coupling settings becoming negligible beyond $t = 4$.

The temporal decay patterns in Fig.~\ref{fig:combined_analysis}(b) demonstrate consistent privacy leakage reduction across all coupling configurations. The most rapid decay occurs during initial periods ($t = 1$ to $t = 3$), where leakage decreases by approximately 60\%. While Full Coupling ($\lambda_{12} = \lambda_{21} = 0.5$) and Medium Coupling ($\lambda_{12} = \lambda_{21} = 0.75$) show 15\% faster initial decay compared to No Coupling ($\lambda_{12} = \lambda_{21} = 1$), all scenarios converge to near-zero leakage (below 0.05) by $t = 6$. This convergence confirms that our CSDP mechanism effectively leverages temporal dynamics regardless of initial coupling settings.

Fig.~\ref{fig:combined_analysis}(c) demonstrates the critical relationship between privacy leakage ($\epsilon_S$) and noise level ($\epsilon_C$) across different data ages. While privacy leakage increases with noise level for all time points, the effect of data age is substantially more significant. Data at $t = 5$ exhibits approximately 95\% less leakage than data at $t = 1$ across all noise levels, ranging from $\epsilon_C = 1$ to $\epsilon_C = 10$. The nonlinear growth pattern for recent data ($t = 1$) is particularly notable, showing a 3x increase in leakage as noise levels rise from $\epsilon_C = 1$ to $\epsilon_C = 10$, compared to only a 1.5x increase for older data ($t = 5$). This finding suggests that privacy budget allocation should prioritize protection of recent data, while older data can tolerate higher noise levels without significant privacy compromises.

\subsection{Utility Analysis}
\label{subsec:utility}

Fig.~\ref{fig:combined_analysis_utility} illustrates key relationships in our privacy-utility analysis, highlighting the impact of aging parameter configurations.

In Fig.~\ref{fig:combined_analysis_utility}(a), we observe that using uniform aging values (e.g., \(x^{(1)}_{t-A^{(1)}_t} = 1\), \(x^{(2)}_{t-A^{(2)}_t} = 0\)) consistently yields 40\% lower error rates at \(\epsilon_C = 10\) compared to varying aging values (e.g., \(x^{(1)}_{t-A^{(1)}_t} = 1\), \(x^{(2)}_{t-A^{(2)}_t} = 1\)). Notably, while accuracy loss decreases with higher noise in both configurations, uniform aging maintains stability across time points, unlike the performance degradation observed with varying aging values.

Fig.~\ref{fig:combined_analysis_utility}(b) shows that correlation strength (\(\lambda\)) has little impact on utility compared to aging parameters. Different aging values lead to 40\% higher MSE at \(t = 10\) compared to uniform aging. Remarkably, all correlation settings under uniform aging converge after \(t = 2\), underscoring the dominant role of aging parameters in system utility.

Fig.~\ref{fig:combined_analysis_utility}(c) demonstrates that privacy leakage is more sensitive to data age than noise level. Optimal solutions use the maximum age time (\(t = 20\)), with uniform aging providing 30\% better privacy protection at \(\gamma_l = 0.6\) compared to varying aging. As accuracy constraints loosen, optimal noise levels decrease from 1.49 to 0.89 for uniform aging and from 1.03 to 0.76 for varying aging values, confirming the effectiveness of leveraging temporal dynamics to balance privacy and utility.

These findings highlight the critical importance of selecting appropriate aging parameters to optimize privacy-utility trade-offs, offering up to 40\% improvement in accuracy and 30\% in privacy protection.

\begin{figure}[t]
\centering
\includegraphics[width=0.49\textwidth]{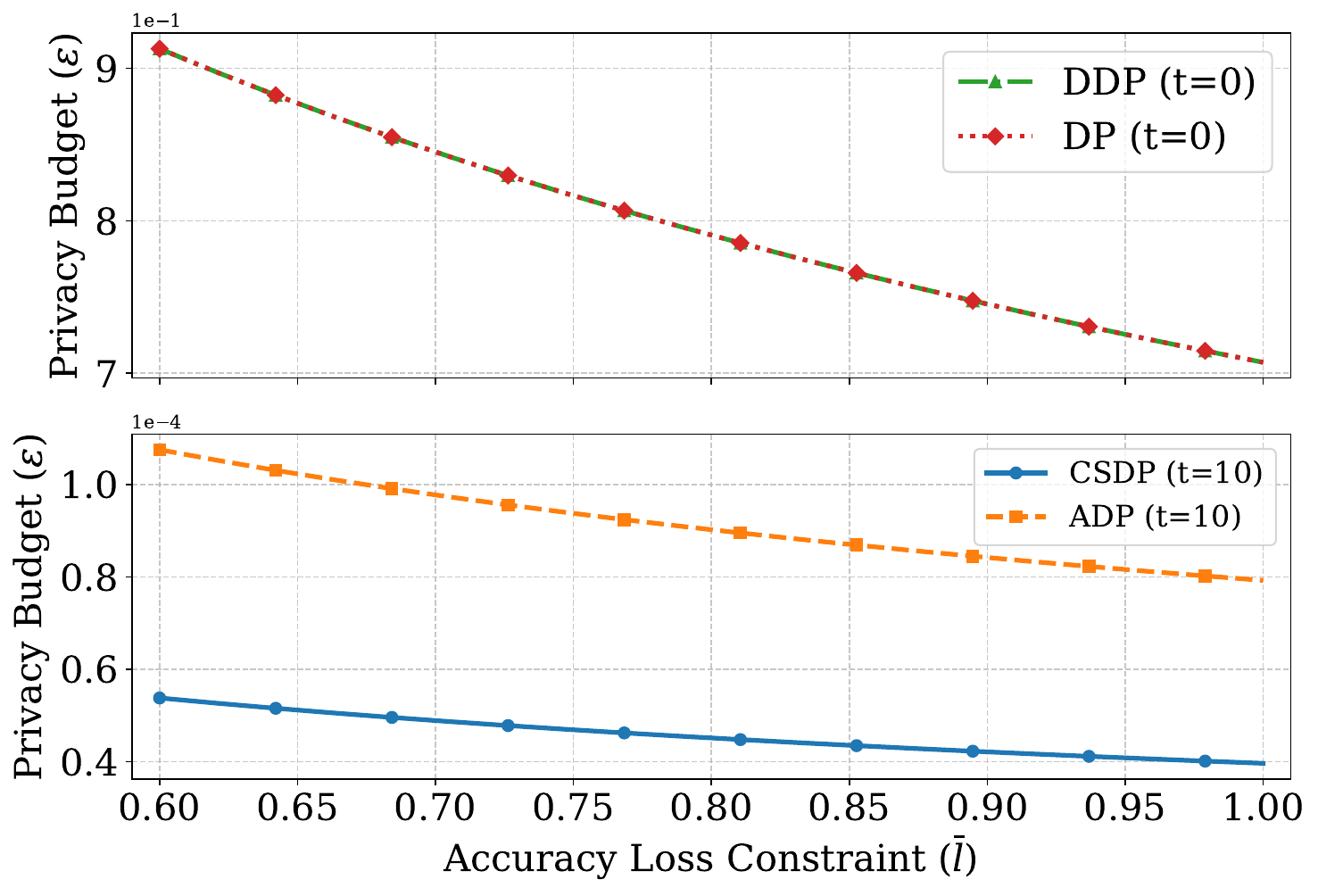}
\caption{Minimum privacy leakage level vs. accuracy constraint for different privacy mechanisms.
}
\vspace{-5pt}  
\label{fig:comparative_evaluation}
\end{figure}

\subsection{Algorithm Comparative Evaluation}
\label{subsec:comparative}
We compare our CSDP framework against baseline methods on a synthetic dataset with parameters $p = q = 0.3$ and $\lambda = 0.5$.
Fig.~\ref{fig:comparative_evaluation} shows the minimum privacy leakage each method achieves under different accuracy constraints. The top subplot displays DDP and DP (optimum at $t = 0$), while the bottom subplot shows CSDP and ADP (optimum at $t = 10$). Privacy leakage decreases as accuracy constraints relax across all methods, but with significant performance differences. CSDP consistently outperforms ADP by approximately 50\%, while conventional methods (DDP and DP) exhibit privacy leakage approximately three orders of magnitude higher than temporal-aware methods.

This performance gap demonstrates CSDP's advantage in modeling both temporal and spatial correlations. Standard DP and DDP perform poorly by failing to leverage temporal effects, while ADP shows improvement but still underperforms CSDP by not fully exploiting inter-sequence dependencies. At an accuracy constraint of $\bar{l} = 0.8$, CSDP achieves a privacy leakage of approximately $4.5 \times 10^{-5}$ compared to $9.0 \times 10^{-5}$ for ADP, $8.0 \times 10^{-2}$ for DDP, and $8.1 \times 10^{-2}$ for standard DP. This makes CSDP particularly suitable for privacy-sensitive applications involving correlated sequential data.

\section{Conclusion and Future Work}

This paper introduces CSDP, a framework designed for preserving privacy in correlated sequential data. We quantify its privacy leakage when both temporal and cross-sequence correlations exist. Using the specific multivariate model--CMC, our analysis reveals that stronger coupling can reduce worst-case leakage by up to 40\% at optimal coupling points ($\lambda = 0.5$).
We develop a linear-time privacy mechanism, FRAN, integrating data aging, correlation-aware sensitivity scaling, and adaptive noise injection. Experiments demonstrate that CSDP outperforms existing methods, reducing privacy leakage by 50\% compared to ADP and by two orders of magnitude compared to standard DP and DDP approaches, under the same accuracy constraint.

Future research should focus on: (1) extending CSDP to non-Markovian dependencies for systems with long-term memory effects; (2) developing bounds for complex graph-structured correlations; and (3) implementing CSDP in real-world applications including financial transaction monitoring, smart city sensor networks, and longitudinal biomedical data sharing. This work addresses a critical gap in differential privacy frameworks for correlated sequential data.

\bibliographystyle{IEEEtran}
\bibliography{references}

@article{ching2002multivariate,
  title={A multivariate Markov chain model for categorical data sequences and its applications in demand predictions},
  author={Ching, Wai-Ki and Fung, Eric S and Ng, Michael K},
  journal={IMA Journal of Management Mathematics},
  volume={13},
  number={3},
  pages={187--199},
  year={2002},
  publisher={OUP}
}

@inproceedings{dwork2006differential,
  title={Differential privacy},
  author={Dwork, Cynthia},
  booktitle={International colloquium on automata, languages, and programming},
  pages={1--12},
  year={2006},
  organization={Springer}
}

@article{hernandez2024comparative,
  title={Differential privacy in health research: A scoping review},
  author={Ficek, Joseph and Wang, Wei and Chen, Henian and Dagne, Getachew and Daley, Ellen},
  journal={Journal of the American Medical Informatics Association},
  volume={28},
  number={10},
  pages={2269--2276},
  year={2021},
  publisher={Oxford University Press}
}

@article{ficek2021differential,
  author = {J. Ficek and W. Wang and H. Chen and G. Dagne},
  title = {Differential Privacy in Health Research: A Scoping Review},
  journal = {Journal of the American Medical Informatics Association},
  volume = {28},
  number = {10},
  pages = {2269-2277},
  year = {2021},
  doi = {10.1093/jamia/ocab135},
}

@article{zhang2022differential,
  author = {S. Zhang and X. Li},
  title = {Differential Privacy Medical Data Publishing Method Based on Attribute Correlation},
  journal = {Scientific Reports},
  volume = {12},
  number = {19544},
  pages = {1--12},
  year = {2022},
  doi = {10.1038/s41598-022-19544-3},
}

@article{liu2024survey,
  title={A survey on differential privacy for medical data analysis},
  author={Liu, WeiKang and Zhang, Yanchun and Yang, Hong and Meng, Qinxue},
  journal={Annals of Data Science},
  volume={11},
  number={2},
  pages={733--747},
  year={2024},
  publisher={Springer}
}

@article{ou2020optimal,
  title={An optimal noise mechanism for cross-correlated IoT data releasing},
  author={Ou, Lu and Qin, Zheng and Liao, Shaolin and Weng, Jian and Jia, Xiaohua},
  journal={IEEE Transactions on Dependable and Secure Computing},
  volume={18},
  number={4},
  pages={1528--1540},
  year={2020},
  publisher={IEEE}
}

@article{zhang2022correlated,
  title={Correlated data in differential privacy: Definition and analysis},
  author={Tao Zhang and Tianqing Zhu and Renping Liu and Wanlei Zhou},
  journal={arXiv preprint	arXiv:2008.00180},
  year={2021}
}

@INPROCEEDINGS{zhao2017dependent,
  author={Zhao, Jun and Zhang, Junshan and Poor, H. Vincent},
  booktitle={IEEE Globecom Workshops}, 
  title={Dependent Differential Privacy for Correlated Data}, 
  year={2017},
  volume={},
  number={},
  pages={1-7},
  keywords={Correlation;Privacy;Data privacy;Databases;Measurement;Organizations;Genetics},
  doi={10.1109/GLOCOMW.2017.8269219}}

@article{Zhang2023,
  title={Age-dependent differential privacy},
  author={Zhang, Meng and Wei, Ermin and Berry, Randall and Huang, Jianwei},
  journal={IEEE Transactions on Information Theory},
  volume={70},
  number={2},
  pages={1300--1319},
  year={2023},
  publisher={IEEE},
  doi={10.1109/TIT.2023.3242034},
}

@article{kifer2014pufferfish,
  title={Pufferfish: A framework for mathematical privacy definitions},
  author={Kifer, Daniel and Machanavajjhala, Ashwin},
  journal={ACM Transactions on Database Systems},
  volume={39},
  number={1},
  pages={1--36},
  year={2014},
  publisher={ACM},
  doi={10.1145/2514689},
}

@ARTICLE{zhang2024gan,
  author={Zhang, Fan and Wang, Luyao and Zhang, Xinhong},
  journal={Big Data Mining and Analytics}, 
  title={Desensitized Financial Data Generation Based on Generative Adversarial Network and Differential Privacy}, 
  year={2025},
  volume={8},
  number={1},
  pages={103-117}
}

@INPROCEEDINGS{chang2024fedcad,
  author={Zhu, Bingzhu and Chang, Shan and Liang, Guanghao and Zhu, Hongzi and Xu, Jie},
  booktitle={IEEE/ACM 32nd International Symposium on Quality of Service}, 
  title={Fed-CAD: Federated Learning with Correlation-aware Adaptive Local Differential Privacy}, 
  year={2024},
  volume={},
  number={},
  pages={1-10},
  keywords={Training;Adaptation models;Privacy;Differential privacy;Federated learning;Gaussian noise;Computational modeling;federated learning;adaptive local differential privacy;Gaussian mechanism;correlation-aware},
  doi={10.1109/IWQoS61813.2024.10682944}}

@article{shuai2024poisoncatcher,
  title={PoisonCatcher: Revealing and Identifying LDP Poisoning Attacks in {IIoT}},
  author={Shuai, Lisha and Tan, Shaofeng and Zhang, Nan and Zhang, Jiamin and Zhang, Min and Yang, Xiaolong},
  journal={arXiv preprint arXiv:2412.15704},
  year={2024}
}

@article{chen2014correlated,
author = {Chen, Rui and Fung, Benjamin C. and Yu, Philip S. and Desai, Bipin C.},
title = {Correlated network data publication via differential privacy},
year = {2014},
issue_date = {August 2014},
publisher = {Springer-Verlag},
address = {Berlin, Heidelberg},
volume = {23},
number = {4},
issn = {1066-8888},
journal = {The VLDB Journal},
month = aug,
pages = {653--676},
numpages = {24},
keywords = {Data correlation, Differential privacy, Network data, Non-interactive publication}
}

@inproceedings{he2020bayesian,
  title={Bayesian differential privacy on correlated data},
  author={Yang, Bin and Sato, Issei and Nakagawa, Hiroshi},
  booktitle={ACM SIGMOD international conference on Management of Data},
  pages={747--762},
  year={2015}
}

\appendices

\section{Proof of Lemma 1}
\label{appendix:lemma1}

\begin{lemma}[CSDP Generalizations]
Two cases are as follows.
\begin{itemize}
\item With only independent distributions, $(\epsilon_S, \Lambda)$-CSDP reduces to $\epsilon_S$-DP.
\item With only temporal correlations, $(\epsilon_S, \Lambda)$-CSDP reduces to $(\epsilon(t), t)$-ADP where $\epsilon_S = \max_t \epsilon(t)$.
\end{itemize}
\end{lemma}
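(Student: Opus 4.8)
The plan is to unfold the CSDP constraint in Definition~\ref{def:csdp} and specialise the dependency family $\mathbf{\Lambda}$ to each degenerate structure, showing that in both cases the conditional likelihood ratio
\[
\frac{\Pr[M(X_{1:T})\in S\mid\boldsymbol{x}_t]}{\Pr[M(X'_{1:T})\in S\mid\boldsymbol{x}'_t]}
\]
collapses onto a previously analysed quantity. Throughout I fix a $t$-neighbouring pair $X_{1:T}\overset{t}{\sim}X'_{1:T}$ that differs only in source $i_0$ at time $t$, and track how the conditioning on the snapshot $\boldsymbol{x}_t$ simplifies.

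For the independent case, first I would observe that when all records are mutually independent, the aged snapshot consumed by FRAN carries no information about $\boldsymbol{x}_t$ beyond what the conditioning already fixes, so aging is vacuous and I may take the AoI vector equal to $\boldsymbol{0}$. The mechanism in \eqref{eq:mechanism} then reduces to $M_S=f(\boldsymbol{x}_t)+\eta$ with $\eta\sim\mathrm{Lap}(\Delta_f/\epsilon_C)$, and conditioning on $\boldsymbol{x}_t$ merely fixes the deterministic argument of $f$. The CSDP inequality \eqref{eq:csdp} becomes $\Pr[\eta\in S-f(\boldsymbol{x}_t)]\le e^{\epsilon_S}\Pr[\eta\in S-f(\boldsymbol{x}'_t)]$, which is exactly the Laplace-mechanism guarantee. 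Because the $t$-neighbour relation coincides with the single-record-change neighbouring relation of standard DP, this is precisely $\epsilon_S$-DP.

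For the temporal-only case, the key step is a factorisation argument. When inter-sequence links vanish, the joint law of $(\boldsymbol{x}^{(1)},\dots,\boldsymbol{x}^{(s)})$ factorises across sources, so both the aged released values and the conditioning snapshot split into independent per-sequence factors. Changing source $i_0$ at time $t$ leaves the marginal laws of every sequence $i\neq i_0$ untouched, so their contributions to numerator and denominator are identical and cancel in the ratio. What survives is the single-sequence ratio comparing $\Pr[M\in S\mid x^{(i_0)}_t]$ with $\Pr[M\in S\mid x'^{(i_0)}_t]$, where the release depends on the aged value of sequence $i_0$. This is exactly the ADP inequality \eqref{eq:adp} applied to sequence $i_0$, bounded by $e^{\epsilon(t)}$. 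Since CSDP must hold simultaneously at every time index $t$ while ADP permits a time-varying budget, I would set $\epsilon_S=\max_t\epsilon(t)$ so that the single CSDP parameter dominates the worst-case temporal leakage, and both reductions then follow by direct substitution into the respective definitions.

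The main obstacle is making the cancellation in the temporal-only case fully rigorous: one must verify that conditioning on the \emph{entire} snapshot $\boldsymbol{x}_t$, rather than only on $x^{(i_0)}_t$, does not covertly couple the surviving sequences to the changed one through the mechanism's dependence on aged data. Establishing this amounts to showing that, under inter-sequence independence, the conditional law of the aged value of sequence $i_0$ given $\boldsymbol{x}_t$ depends only on $x^{(i_0)}_t$, so the off-sequence conditioning is genuinely inert. Once this Markov-style factorisation is in place, the $\max_t$ reduction to ADP and the vacuous-conditioning reduction to DP are routine.
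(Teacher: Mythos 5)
Your proposal follows essentially the same route as the paper's own proof: unfold Definition~\ref{def:csdp}, show that under independence the conditioning on $\boldsymbol{x}_t$ becomes inert so the CSDP ratio collapses to the standard DP ratio, and under temporal-only correlation factorise across sequences so the ratio reduces to the per-sequence ADP inequality with $\epsilon_S=\max_ت\epsilon(t)$ (you even flag, and the paper likewise glosses over, the step of verifying that conditioning on the full snapshot rather than on $x^{(i_0)}_t$ alone is harmless). Your treatment is if anything slightly more explicit than the paper's about the worst-case AoI in the independent case and about the cancellation of unchanged sequences, but it is the same argument.
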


\begin{proof}
\textbf{Part 1: Independent distributions}

When all sequences are independent, we have:
\begin{equation}
\Pr[M(X_{1:T}) \in S | x_t] = \Pr[M(X_{1:T}) \in S]
\end{equation}
\begin{equation}
    \Pr[M(X'_{1:T}) \in S | x'_t] = \Pr[M(X'_{1:T}) \in S]
\end{equation}

This is because conditioning on individual records provides no additional information when sequences are independent. Therefore, the CSDP condition becomes:
\begin{equation}
    \frac{\Pr[M(X_{1:T}) \in S]}{\Pr[M(X'_{1:T}) \in S]} \leq e^{\epsilon_S}
\end{equation}

which is exactly the standard DP definition.

\textbf{Part 2: Temporal correlations only}

When there are no cross-sequence correlations, each sequence $i$ evolves independently according to:
\begin{equation}
    \Pr[x^{(i)}_{t+1} | X_t] = \Pr[x^{(i)}_{t+1} | x^{(i)}_t]
\end{equation}

The CSDP condition must hold for each sequence independently:
\begin{equation}
    \frac{\Pr[M(X_{1:T}) \in S | x^{(i)}_t = x]}{\Pr[M(X_{1:T}) \in S | x^{(i)}_t = x']} \leq e^{\epsilon_S}
\end{equation}

By the ADP definition, each sequence satisfies:
\begin{equation}
    \frac{\Pr[M(X_0) \in S | x^{(i)}_t = x]}{\Pr[M(X_0) \in S | x^{(i)}_t = x']} \leq e^{\epsilon(t)}
\end{equation}

for some time-dependent $\epsilon(t)$. To ensure the bound holds for all times, we take $\epsilon_S = \max_t \epsilon(t)$.
\end{proof}

\section{Proof of Theorem 1}
\label{appendix:thm1}

\begin{theorem}[Loose Leakage Bound]
Consider mechanism FRAN with AoI vector $A_t$ and Laplace noise level $\epsilon_C$. FRAN satisfies $(\epsilon_S, \Lambda)$-CSDP with
\begin{equation}
\epsilon_S(A_t, \epsilon_C) = d^{(k)} \Delta_k(\Lambda, A_t) \epsilon_C.
\end{equation}
\end{theorem}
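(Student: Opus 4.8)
The plan is to bound the conditional likelihood ratio in Definition~\ref{def:csdp} head-on, exploiting the fact that FRAN publishes only the \emph{aged} snapshot $\tilde{\boldsymbol{x}}_t$, so the differing fresh entry $x_t^{(i_0)}$ never enters the release directly and can leak only through correlations. First I would fix a pair of $t$-neighbouring databases $X_{1:T}\overset{t}{\sim}X'_{1:T}$, write $h_S(\boldsymbol{z})=\Pr[f(\boldsymbol{z})+\eta\in S]$ for the Laplace output kernel from \eqref{eq:mechanism}, and express each conditional output law as a marginalisation over the (random) aged snapshot:
\begin{equation}
\Pr[M_S(X_{1:T})\in S\mid\boldsymbol{x}_t]=\sum_{\boldsymbol{z}}\Pr[\tilde{\boldsymbol{x}}_t=\boldsymbol{z}\mid\boldsymbol{x}_t]\,h_S(\boldsymbol{z}),
\end{equation}
and analogously for $\boldsymbol{x}'_t$. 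The entire task reduces to bounding the ratio of these two averages by $e^{\epsilon_S}$.

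Second, I would localise the comparison to the correlated block. By Definition~\ref{def:k} the record $i_0$ is directly correlated with at most $k-1$ other users, so conditioning on $\boldsymbol{x}_t$ versus $\boldsymbol{x}'_t$ alters only the marginal of the size-$k$ aged sub-snapshot $\boldsymbol{x}^{\mathcal{K}}_{t-A^{\mathcal{K}}_t}$; the remaining coordinates carry identical conditional laws and factor out. On this block two facts combine: (i) the total-variation gap between $\Pr[\cdot\mid\boldsymbol{x}^{\mathcal{K}}_t]$ and $\Pr[\cdot\mid\boldsymbol{x}'^{\mathcal{K}}_t]$ is at most $\Delta_k(\mathbf{\Lambda},\boldsymbol{A}_t)$ by Definition~\ref{def:tvk}; and (ii) whenever two aged snapshots $\boldsymbol{z},\boldsymbol{z}'$ differ only inside this block, the Laplace kernel obeys the pointwise ratio $h_S(\boldsymbol{z})/h_S(\boldsymbol{z}')\le\exp\!\bigl(\epsilon_C\lVert f(\boldsymbol{z})-f(\boldsymbol{z}')\rVert_1/\Delta_f\bigr)\le e^{d(k)\epsilon_C}$, since $\lVert f(\boldsymbol{z})-f(\boldsymbol{z}')\rVert_1\le s_k(f)=d(k)\,s_1(f)=d(k)\Delta_f$ by Definition~\ref{def:dk}.

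Third, I would fuse these ingredients through a coupling of the two aged-block laws: on the shared mass (at least $1-\Delta_k$) the kernel contributes identically, while on the mismatched mass (at most $\Delta_k$) each surviving outcome is inflated by at most $e^{d(k)\epsilon_C}$. Carried through the two marginalised averages this produces an ADP-style composite factor $1+\Delta_k\bigl(e^{d(k)\epsilon_C}-1\bigr)$, exactly paralleling the ADP leakage result quoted in the Preliminaries; the advertised product $\epsilon_S=d(k)\,\Delta_k\,\epsilon_C$ is then read off as the governing closed-form expression in the operating regime where $d(k)\epsilon_C$ is moderate.

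The step I expect to be the main obstacle is precisely this fusion. Passing from a \emph{per-outcome} Laplace ratio together with a \emph{distributional} (total-variation) shift to a single exponential bound on the ratio of the two averages $\sum_{\boldsymbol{z}}\Pr[\cdots]\,h_S(\boldsymbol{z})$ requires setting up the coupling so that only the mismatched mass is charged the $e^{d(k)\epsilon_C}$ factor. Moreover, collapsing the exact composite $\ln\!\bigl(1+\Delta_k(e^{d(k)\epsilon_C}-1)\bigr)$ into the clean linear product $d(k)\Delta_k\epsilon_C$ is delicate: the two expressions coincide at the endpoints $\Delta_k\in\{0,1\}$ but diverge in between, so the product must be justified either by linearising the Laplace factor ($e^{d(k)\epsilon_C}-1\approx d(k)\epsilon_C$) in the small-$\epsilon_C$ regime or by adopting it explicitly as the intended loose surrogate. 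A secondary care point is certifying the correlation-degree localisation, i.e.\ that users outside the size-$k$ neighbourhood genuinely cancel under every $\pi\in\mathbf{\Lambda}$, which is what pins the sensitivity at $s_k(f)$ rather than at the full $s_s(f)$.
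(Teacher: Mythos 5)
Your proposal follows essentially the same route as the paper's own proof: marginalize the output law over the aged snapshot, bound the conditional shift on the size-$k$ correlated block by $\Delta_k(\mathbf{\Lambda},\boldsymbol{A}_t)$, bound the Laplace kernel ratio by $e^{d(k)\epsilon_C}$ via the $k$-sensitivity, and fuse them into the ADP-style composite $\ln\bigl(1+\Delta_k(e^{d(k)\epsilon_C}-1)\bigr)$. The obstacle you flag at the final step is genuine and is present in the paper as well: its last line invokes $\ln(1+x(e^y-1))\le xy$, which fails for $x\in(0,1)$, $y>0$ because the left side is concave in $x$ and therefore lies \emph{above} the chord $xy$ joining its endpoint values, so the clean product $d(k)\,\Delta_k\,\epsilon_C$ is only a small-$\epsilon_C$ linearization (or an adopted surrogate) rather than a uniformly valid upper bound on the composite.
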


\begin{proof}
We need to bound the privacy leakage when two $t$-neighboring datasets differ in exactly one entry $(i_0, t)$.

\textbf{Step 1: Express the CSDP condition}

We need to show that for any output set $S \subseteq \mathcal{Y}$:
\begin{equation}
    \frac{\Pr[M_S(X_{1:T}) \in S | x_t]}{\Pr[M_S(X'_{1:T}) \in S | x'_t]} \leq e^{\epsilon_S}
\end{equation}

Using the law of total probability:
\begin{equation}
    \Pr[M_S(X_{1:T}) \in S | x_t] = \sum_{z \in \mathcal{X}^s} \Pr[z | x_t] \Pr[M_S(z) \in S]
\end{equation}

where $z$ represents the aged data $z = x_{t-A_t}$.

\textbf{Step 2: Analyze the expectation difference}

Following the approach from your reference, we analyze:

\begin{equation}
    \sup_{i \in \mathcal{I}} \left|\mathbb{E}[f(z) | x_i, x_{-i}] - \mathbb{E}[f(z) | x'_i, x_{-i}]\right|
\end{equation}

where we keep the conditioning intact without decomposing the conditional probabilities.

Expanding the expectation:
\begin{align}
\mathbb{E}[f(z) | x] &= \sum_{z \in \mathcal{X}^s} \Pr[z | x] f(z)
\end{align}

\textbf{Step 3: Apply the key observation}

The difference becomes:
\begin{align}
&\left|\mathbb{E}[f(z) | x] - \mathbb{E}[f(z) | x']\right| \\
&= \left|\sum_{z \in \mathcal{X}^s} (\Pr[z | x] - \Pr[z | x']) f(z)\right| \\
&\leq \Delta_{All}(t) \cdot (f(\overline{z}) - f(\underline{z}))
\end{align}

where $\overline{z} = \arg\max_{z} f(z)$ and $\underline{z} = \arg\min_{z} f(z)$.

\textbf{Step 4: Handle the $k$-sensitivity}

Since at most $k$ users are correlated with the changed user $i_0$, the sensitivity bound becomes:
\begin{equation}
    f(\overline{z}) - f(\underline{z}) \leq d^{(k)} \cdot \Delta_f
\end{equation}

where $d^{(k)} = \frac{s_k(f)}{s_1(f)}$ captures how the sensitivity scales with $k$ correlated records.

\textbf{Step 5: Apply the aging effect}

Under the aging mechanism, the total variation distance is bounded by:
\begin{equation}
   \Delta_{All}(t) \leq \Delta_k(\Lambda, A_t) 
\end{equation}

This captures how aging weakens the correlation between current and past states, where:
\begin{equation}
    \Delta_k(\Lambda, A_t) = \max_{x^K, x'^K} \delta(\Pr[z^K | x^K], \Pr[z^K | x'^K])
\end{equation}

\textbf{Step 6: Apply the Laplace mechanism}

The Laplace mechanism with noise scale $\sigma = \frac{\Delta_f}{\epsilon_C}$ satisfies:

\begin{equation}
    \frac{\Pr[M(z) \in S]}{\Pr[M(z') \in S]} \leq e^{\frac{\epsilon_C}{\Delta_f} \|f(z) - f(z')\|_1}
\end{equation}

\textbf{Step 7: Combine all bounds}

Let $\mathcal{L}$ denote the privacy leakage ratio:
\begin{equation}
\mathcal{L} = \frac{\Pr[M_S(X_{1:T}) \in S | x_t]}{\Pr[M_S(X'_{1:T}) \in S | x'_t]}
\end{equation}

The privacy leakage can be bounded by:
\begin{align}
\ln(\mathcal{L}) 
&\leq \ln\big(1 + \Delta_k(\Lambda, A_t) \cdot (e^{d^{(k)} \epsilon_C} - 1)\big) \nonumber\\
&\leq \Delta_k(\Lambda, A_t) \cdot d^{(k)} \epsilon_C
\end{align}

The last inequality uses $\ln(1 + x(e^y - 1)) \leq xy$ for appropriate ranges of $x$ and $y$.

Therefore, FRAN satisfies $(\epsilon_S, \Lambda)$-CSDP with:
$$\epsilon_S(A_t, \epsilon_C) = d^{(k)} \Delta_k(\Lambda, A_t) \epsilon_C$$
\end{proof}

\section{Proof of Theorem 2}
\label{appendix:thm2}

\begin{theorem}[Tight Leakage Bound]
FRAN satisfies $(\epsilon_S^{tight}, \Lambda)$-CSDP with
\begin{equation}
\epsilon_S^{tight}(A_t, \epsilon_C) = \bar{\Delta}(\Lambda, A_t) \epsilon_C.
\end{equation}
\end{theorem}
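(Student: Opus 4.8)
The plan is to mirror the structure of the loose-bound argument in Theorem~\ref{thm:loose}, but to avoid the coarse product $d(k)\,\Delta_{k}$ by keeping the conditional distributions intact and controlling the likelihood ratio directly. First I would fix a time index $t$ and a pair of $t$-neighbouring databases that differ only in user $i$'s fresh record, $x_{i}$ versus $x_{i}'$, with all other fresh records $\boldsymbol{x}_{-i}$ held equal, and write the CSDP ratio of Definition~\ref{def:csdp} as
\begin{equation}
\mathcal{L}=\frac{\Pr[M(X_{1:T})\in S\mid x_{i},\boldsymbol{x}_{-i}]}{\Pr[M(X'_{1:T})\in S\mid x_{i}',\boldsymbol{x}_{-i}]}.
\end{equation}
Because FRAN (Algorithm~\ref{alg:2step_mechanism}) first ages the snapshot to $\tilde{\boldsymbol{x}}_{t}=(z_{i},\boldsymbol{z}_{-i})$ and only then injects Laplace noise, I would apply the law of total probability over the aged data, writing $\Pr[M\in S\mid x_{i},\boldsymbol{x}_{-i}]=\sum_{z}\Pr[z_{i},\boldsymbol{z}_{-i}\mid x_{i},\boldsymbol{x}_{-i}]\,\Pr[f(z)+\eta\in S]$, so each conditional output probability becomes a mixture whose weights carry all the residual correlation and whose kernel is the fixed Laplace response $\Pr[f(z)+\eta\in S]$.

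The second step isolates the single-user aged-given-fresh term. Using Bayes' rule I would rewrite the joint mixing weight $\Pr[z_{i},\boldsymbol{z}_{-i}\mid x_{i},\boldsymbol{x}_{-i}]$ as $\Pr[z_{i}\mid x_{i}]$ times a coupling-correction factor that collects all cross-user conditional terms, namely the ratio appearing inside Definition~\ref{def:bdtv}. Maximising and minimising this correction over the remaining configurations $(\boldsymbol{z}_{-i},z_{i})$ yields exactly the envelopes $\bar{g}(\boldsymbol{x}_{-i},x_{i})$ and $\underline{g}(\boldsymbol{x}_{-i},x_{i})$, so the numerator mixture is dominated by $\bar{g}\,\Pr[z_{i}\mid x_{i}]$ and the denominator by $\underline{g}\,\Pr[z_{i}\mid x_{i}']$. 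Crucially, these envelopes absorb the cross-sequence sensitivity amplification that the loose bound charged separately as $d(k)$, which is why the kernel step below uses only the single-record parameter $\epsilon_{C}$ rather than $d(k)\,\epsilon_{C}$. Taking the supremum over output events and over the neighbouring pair collapses the residual dependence into the single quantity $\bar{\Delta}(\mathbf{\Lambda},\boldsymbol{A}_{t})$ of Definition~\ref{def:bdtv}, the total-variation distance between the two envelope-weighted measures.

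With the correlation contribution pinned at $\bar{\Delta}$, the final step is the Laplace degradation argument. The per-output Laplace response obeys the $\epsilon_{C}$-DP ratio $\Pr[f(z)+\eta\in S]\le e^{\epsilon_{C}}\Pr[f(z')+\eta\in S]$ for single-record neighbours, so combining this with the mixture-weight bound through the same coupling identity that underlies $(\epsilon(t),t)$-ADP gives $\ln\mathcal{L}\le\ln\!\bigl(1+\bar{\Delta}(\mathbf{\Lambda},\boldsymbol{A}_{t})(e^{\epsilon_{C}}-1)\bigr)\le\bar{\Delta}(\mathbf{\Lambda},\boldsymbol{A}_{t})\,\epsilon_{C}$, which is~\eqref{eq:tight}. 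I expect the main obstacle to lie in the second step: justifying the Bayesian separation of the single-user term and proving that the worst-case correction is sandwiched between $\underline{g}$ and $\bar{g}$ \emph{uniformly} over all other-user states and their aged copies, so that the ratio of mixtures is genuinely controlled by the envelope-weighted measures and not merely pointwise. A useful consistency check here is to confirm that, in the regimes where correlations decay with age, this refinement never loosens Theorem~\ref{thm:loose}, i.e. $\bar{\Delta}(\mathbf{\Lambda},\boldsymbol{A}_{t})\le d(k)\,\Delta_{k}(\mathbf{\Lambda},\boldsymbol{A}_{t})$.
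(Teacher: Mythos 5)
Your proposal matches the paper's own proof essentially step for step: the law of total probability over the aged snapshot, the Bayes-rule decomposition of $\Pr[z_{i},\boldsymbol{z}_{-i}\mid x_{i},\boldsymbol{x}_{-i}]$ into $\Pr[z_{i}\mid x_{i}]$ times a coupling-correction ratio, the max/min envelopes $\bar{g}$ and $\underline{g}$ collapsing into $\bar{\Delta}$, and the final Laplace step via $\ln\bigl(1+\bar{\Delta}(e^{\epsilon_{C}}-1)\bigr)\le\bar{\Delta}\,\epsilon_{C}$ are all exactly the paper's Steps 1--8. Your closing remarks on the uniformity of the envelope sandwich and the consistency check $\bar{\Delta}\le d(k)\,\Delta_{k}$ go slightly beyond what the paper makes explicit, but the route is the same.
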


\begin{proof}
We analyze the conditional probabilities using the precise decomposition from your reference.

\textbf{Step 1: Express the CSDP condition}

For any $t$-neighboring datasets differing in user $i_0$'s record at time $t$, we need to bound:
\begin{equation}
\frac{\Pr[M(X_{1:T}) \in S | x_t]}{\Pr[M(X'_{1:T}) \in S | x'_t]}
\end{equation}

Using the law of total probability:
\begin{equation}
\Pr[M(X_{1:T}) \in S | x_t] = \sum_{z \in \mathcal{X}^s} \Pr[z | x_t] \Pr[M(z) \in S]
\end{equation}

\textbf{Step 2: Apply the conditional probability decomposition}

Following the decomposition from your reference, for the joint probability $\Pr[z_i, z_{-i} | x_i, x_{-i}]$:

\begin{align}
\Pr[z_i, z_{-i} | x_i, x_{-i}] 
&= \Pr[z_{-i} | x_i, x_{-i}] \Pr[z_i | z_{-i}, x_i, x_{-i}] \nonumber\\
&= \Pr[z_{-i} | x_{-i}] \cdot \frac{\Pr[x_i | x_{-i}, z_{-i}]}{\Pr[x_i | x_{-i}]} \nonumber\\
&\quad \times \Pr[z_i | z_{-i}, x_i, x_{-i}]
\end{align}

Continuing the decomposition:
\begin{align}
\Pr[z_i, z_{-i} | x_i, x_{-i}] 
&= \Pr[z_{-i} | x_{-i}] \cdot R_1 \cdot \Pr[z_i | x_i] \cdot R_2 \nonumber\\
&= \Pr[z_{-i} | x_{-i}] \Pr[z_i | x_i] \cdot R_1 \cdot R_2 \cdot R_3
\end{align}
where 
\begin{align}
R_1 &= \frac{\Pr[x_i | x_{-i}, z_{-i}]}{\Pr[x_i | x_{-i}]}, \quad
R_2 = \frac{\Pr[x_{-i} | x_i, z_i]}{\Pr[x_{-i} | x_i]}, \nonumber\\
R_3 &= \frac{\Pr[z_{-i} | x_i, x_{-i}, z_i]}{\Pr[z_{-i} | x_i, x_{-i}]}.
\end{align}

\textbf{Step 3: Bound the ratio without assumptions}

Let $R(z_{-i}, z_i) = R_1 \cdot R_2 \cdot R_3$ denote the product of ratios defined above. Without making any assumptions, we can bound:

\begin{align}
R(z_{-i}, z_i) 
&\leq \max_{z_{-i}, z_i} R(z_{-i}, z_i) \\
&\geq \min_{z_{-i}, z_i} R(z_{-i}, z_i)
\end{align}

\textbf{Step 4: Define the g-functions precisely}

Based on the decomposition, we define:
\begin{align}
\bar{g}(x_{-i}, x_i) &= \max_{z_{-i}, z_i} \frac{\Pr[x_i | x_{-i}, z_{-i}]}{\Pr[x_i | x_{-i}]} \frac{\Pr[x_{-i} | x_i, z_i]}{\Pr[x_{-i} | x_i]} \\
\underline{g}(x_{-i}, x_i) &= \min_{z_{-i}, z_i} \frac{\Pr[x_i | x_{-i}, z_{-i}]}{\Pr[x_i | x_{-i}]} \frac{\Pr[x_{-i} | x_i, z_i]}{\Pr[x_{-i} | x_i]}
\end{align}

Note: The term $\frac{\Pr[z_{-i} | x_i, x_{-i}, z_i]}{\Pr[z_{-i} | x_i, x_{-i}]}$ is incorporated into the bounded aged correlation analysis.

\textbf{Step 5: Define the bounded aged correlation}

The bounded aged correlation $\bar{\Delta}(\Lambda, A_t)$ captures the worst-case scenario:
\begin{align}
\bar{\Delta}(\Lambda, A_t) 
&= \max_{(x_{-i}, x_i), (x_{-i}, x'_i)} \delta\Big(\bar{g}(x_{-i}, x_i) \Pr[z_i | x_i], \nonumber\\
&\qquad \underline{g}(x_{-i}, x'_i) \Pr[z_i | x'_i]\Big)
\end{align}

where $\delta(\cdot, \cdot)$ denotes the total variation distance.

\textbf{Step 6: Apply the expectation analysis}

For the expectation difference:
\begin{equation}
\left|\mathbb{E}[f(z) | x_t] - \mathbb{E}[f(z) | x'_t]\right| \leq \bar{\Delta}(\Lambda, A_t) \cdot (f(\overline{z}) - f(\underline{z}))
\end{equation}

\textbf{Step 7: Combine with the Laplace mechanism}

The Laplace mechanism provides:
\begin{equation}
\frac{\Pr[M(z) \in S]}{\Pr[M(z') \in S]} \leq e^{\frac{\epsilon_C}{\Delta_f} \|f(z) - f(z')\|_1}
\end{equation}

\textbf{Step 8: Final bound}

Combining all components:
\begin{equation}
\ln\left(\frac{\Pr[M(X_{1:T}) \in S | x_t]}{\Pr[M(X'_{1:T}) \in S | x'_t]}\right) \leq \bar{\Delta}(\Lambda, A_t) \epsilon_C
\end{equation}

Therefore, FRAN satisfies $(\epsilon_S^{tight}, \Lambda)$-CSDP with $\epsilon_S^{tight}(A_t, \epsilon_C) = \bar{\Delta}(\Lambda, A_t) \epsilon_C$.
\end{proof}

\section{Proof of Proposition 2}
\label{appendix:prop2}

\begin{proposition}[CMC Privacy Leakage]
FRAN in Algorithm 1 satisfies $(\epsilon_S, \Lambda)$-CSDP under CMC with
\begin{equation}
\epsilon_S^{CMC}(A_t, \epsilon_C) = d^{(k)} \Phi(\lambda, A_t) \epsilon_C.
\end{equation}
\end{proposition}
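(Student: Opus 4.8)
The plan is to obtain Proposition~\ref{prop:cmc-leakage} as a direct specialization of the loose bound in Theorem~\ref{thm:loose}: the CMC of Section~III.B simply pins down a concrete parametric subfamily of joint distributions inside the abstract dependency class $\mathbf{\Lambda}$, after which the bound transfers by substituting the definition of $\Phi$. First I would verify the containment $\mathbf{\Lambda}(\lambda)\subseteq\mathbf{\Lambda}$. Each CMC with block matrix $\boldsymbol{Q}$ composed of $\lambda_{jk}\boldsymbol{P}^{(jk)}$ and coupling strengths bounded by $\lambda$ defines, through the recursion $\boldsymbol{\pi}_{n+1}=\boldsymbol{Q}\boldsymbol{\pi}_n$, a consistent joint law over the spatio-temporal database $X_{1:T}$ that encodes exactly the intra- and inter-sequence dependencies $\mathbf{\Lambda}$ is designed to capture; collecting these laws over all admissible $\boldsymbol{P}^{(jk)}$ and all $\lambda_{jk}\le\lambda$ yields the restricted family $\mathbf{\Lambda}(\lambda)$.

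With containment in hand, Theorem~\ref{thm:loose} applies verbatim to every $\pi\in\mathbf{\Lambda}(\lambda)$, certifying that FRAN satisfies $(\epsilon_S,\mathbf{\Lambda}(\lambda))$-CSDP with leakage factor $d(k)\,\Delta_k(\mathbf{\Lambda}(\lambda),\boldsymbol{A}_t)\,\epsilon_C$. The second step is then purely definitional: Definition~\ref{def:cmc-coeff} sets $\Phi(\lambda,\boldsymbol{A}_t):=\Delta_k(\mathbf{\Lambda}(\lambda),\boldsymbol{A}_t)$, so substitution gives
\begin{equation}
\epsilon_S^{\text{CMC}}(\boldsymbol{A}_t,\epsilon_C)=d(k)\,\Phi(\lambda,\boldsymbol{A}_t)\,\epsilon_C,
\end{equation}
and since the bound holds uniformly over the subfamily $\mathbf{\Lambda}(\lambda)$ the CSDP guarantee of Definition~\ref{def:csdp} carries over to the CMC setting unchanged.

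The one place demanding genuine care—and the step I expect to be the main obstacle—is showing that $\Delta_k(\mathbf{\Lambda}(\lambda),\boldsymbol{A}_t)$ is well-defined and stable for CMCs, i.e., that the maximization over aged $k$-snapshots in Definition~\ref{def:tvk} neither diverges nor depends pathologically on the horizon $T$. Here I would invoke the two spectral properties of the CMC: the block matrix $\boldsymbol{Q}$ has dominant eigenvalue $1$ with every other eigenvalue strictly inside the unit disk, and the chain converges to the unique stationary vector $\boldsymbol{\pi}^{\star}$. These ensure that the conditional laws $\Pr[\boldsymbol{x}^{\mathcal{K}}_{t-A^{\mathcal{K}}_t}\mid\boldsymbol{x}^{\mathcal{K}}_t]$ relax toward $\boldsymbol{\pi}^{\star}$ and that the total-variation gap between two fresh snapshots contracts geometrically in the age $\boldsymbol{A}_t$, at a rate set by the subdominant eigenvalue modulus of $\boldsymbol{Q}$. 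Making this contraction precise—essentially a Perron--Frobenius or coupling argument on $\boldsymbol{Q}$—is what turns the definitional substitution into a meaningful, age-decaying coefficient $\Phi$; once that is secured, the remaining algebra is immediate from Theorem~\ref{thm:loose}.
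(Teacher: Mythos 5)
Your proposal is correct and follows essentially the same route as the paper: the paper's proof is likewise a one-step specialization of Theorem~\ref{thm:loose}, replacing $\Delta_k(\mathbf{\Lambda}(\lambda),\boldsymbol{A}_t)$ by $\Phi(\lambda,\boldsymbol{A}_t)$ via Definition~\ref{def:cmc-coeff} and substituting into the loose bound. The extra care you devote to the containment $\mathbf{\Lambda}(\lambda)\subseteq\mathbf{\Lambda}$ and to the spectral well-definedness of $\Delta_k$ under the CMC goes beyond the paper's purely definitional substitution, but it does not alter the argument.
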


\begin{proof}
This follows directly from Theorem 1 by specializing to the CMC model.

Under the CMC model, the aged correlation distance $\Delta_k(\Lambda, A_t)$ from Theorem 1 becomes:
\begin{equation}
\Delta_k(\Lambda(\lambda), A_t) = \Phi(\lambda, A_t)
\end{equation}

where $\Lambda(\lambda)$ denotes the family of joint distributions induced by the CMC with coupling strength parameter $\lambda$, and $\Phi(\lambda, A_t)$ is the CMC leakage coefficient capturing the correlation decay under aging.

By applying Theorem 1 with this substitution:
\begin{align}
\epsilon_S^{CMC}(A_t, \epsilon_C) &= d^{(k)} \Delta_k(\Lambda(\lambda), A_t) \epsilon_C \\
&= d^{(k)} \Phi(\lambda, A_t) \epsilon_C
\end{align}

The $k$-sensitivity $d^{(k)}$ and the Laplace noise level $\epsilon_C$ remain the same as in the general case.

Therefore, FRAN satisfies $(\epsilon_S^{CMC}, \Lambda)$-CSDP under the CMC model.
\end{proof}

\end{document}